\newcommand{\doublewidetilde}[1]{{%
		\mathpalette\double@widetilde{#1}}}
\newcommand{\double@widetilde}[2]{%
	\sbox\z@{$\m@th#1\widetilde{#2}$}%
	\ht\z@=.5\ht\z@
	\widetilde{\box\z@}}
\newtheorem{lemma}{Lemma}
\newtheorem{corollary}{Corollary}
\newtheorem{remark}{Remark}
\begin{document}
	\title{Model-based and Data-driven Approaches for Downlink Massive MIMO Channel Estimation}

	\author{\normalsize Amin Ghazanfari, Trinh Van Chien,   \IEEEmembership{Member,~IEEE}, Emil Bj\"{o}rnson,~\IEEEmembership{Fellow,~IEEE} and Erik G. Larsson, ~\IEEEmembership{Fellow,~IEEE} 
		\thanks{Amin Ghazanfari was with the Department of Electrical Engineering (ISY), Link\"{o}ping University (LiU), 581~83 Link\"{o}ping, Sweden and he is now with Huawei Sweden. This work was done during his PhD at LiU (email: aminghazanfari83@gmail.com).}
		\thanks{Emil Bj\"{o}rnson and Erik G. Larsson are with the Department of Electrical Engineering (ISY), Link\"{o}ping University, 581~83 Link\"{o}ping, Sweden (email: emilbjo@kth.se, erik.g.larsson@liu.se). Emil Bj\"{o}rnson is also with the Department of Computer Science, KTH Royal Institute of Technology, 164 40 Kista, Sweden.}
		\thanks{Trinh Van Chien is with the Interdisciplinary Centre for Security, Reliability and Trust (SnT), University of Luxembourg, Luxembourg City 2721, Luxembourg (email:vanchien.trinh@uni.lu).}
		\thanks{This paper was supported by ELLIIT and the Grant 2019-05068 from the Swedish Research Council. A shorter version of this work was presented at the 17th International Symposium on Wireless Communication Systems (ISWCS) \cite{ghazanfari2021learning}.}
	}

	
	\maketitle
	
	\begin{abstract}
We study downlink channel estimation in a multi-cell Massive multiple-input multiple-output (MIMO) system operating in time-division duplex. The users must know their effective channel gains to decode their received downlink data. Previous works have used the mean value as the estimate, motivated by channel hardening. However, this is associated with a performance loss in non-isotropic scattering environments.
We propose two novel estimation methods that can be applied without downlink pilots.
The first method is model-based and asymptotic arguments are utilized to identify a connection between the effective channel gain and the 
average received power during a coherence interval.
\textcolor{black}{The second method} is data-driven and trains a neural network to identify a mapping between the available information and the effective channel gain.
Both methods can be utilized for any channel distribution and precoding. {\color{black}For the model-aided method, we derive all expressions in closed form for the case when maximum ratio or zero-forcing precoding is used.}  We compare the proposed methods with the state-of-the-art using the normalized mean-squared error and spectral efficiency (SE). The results suggest that the two proposed methods provide better SE than the state-of-the-art when there is a low level of channel hardening, while the performance difference is relatively small with the uncorrelated channel model.
	\end{abstract}
	\begin{IEEEkeywords}
		Downlink Channel Estimation, Massive MIMO, Neural Networks, Linear Precoding, {\color{black}Non-Isotropic Scattering}.
	\end{IEEEkeywords}

	\IEEEpeerreviewmaketitle

	\section{Introduction}
	Massive multiple-input multiple-output (MIMO) is one of the backbone technologies for 5G-and-beyond networks \cite{larsson2014massive,Parkvall2017a}. In Massive MIMO, each base station (BS) is equipped with many active antennas to facilitate adaptive beamforming towards individual users and spatial multiplexing of many users \cite{marzetta2010noncooperative}. In this way, the technology can improve the spectral efficiency (SE) for individual users and, particularly, increase the sum SE in highly loaded networks by orders of magnitude compared with conventional cellular technology with passive antennas \cite{bjornson2017massive}. Having accurate channel state information (CSI) is essential in Massive MIMO networks \cite{larsson2014massive}, so that the transmission and reception can be tuned to the user channels, to amplify desired signals and reject interference.
	Time-division duplex (TDD) operation is preferable for CSI acquisition because the BSs can then acquire uplink CSI from the uplink pilot transmission and utilize the uplink-downlink channel reciprocity to transform it to downlink CSI \cite{bjornson2017massive}. In this way, the required pilot resources are proportional to the number of users but independent of the number of BS antennas. In contrast, in frequency-division duplex (FDD), the BSs must send downlink pilots to help the users to estimate the downlink channels, which requires pilot resources are proportional to the number of BS antennas. Then, the users must feed back the estimated downlink channels to the BSs, which consumes further resources  \cite{bjornson2017massive}.
	Thus, in this paper, we consider a TDD Massive MIMO network, which is in line with the current 5G deployments that mainly utilize TDD bands.
	
	To decode the downlink signals coherently, each user must provide an estimate of the effective downlink channel gain and the effective noise variance to the decoding algorithm.
	The effective downlink channel gain is an inner product of the precoding vector and the channel vector, thus the user only needs to know this scalar, not the individual vectors. The value of this scalar varies due to channel fading. 
	A common approach in the Massive MIMO literature is to estimate the scalar based on its long-term statistics \cite{jose2011pilot,yang2013performance,redbook}, which are constant and thus can be assumed to be known at the user.
	In particular, the mean of the effective channel gain can be used as the estimate, motivated by the channel hardening effect, which says that the variations around the mean are small when the number of antennas is large \cite{redbook}. However, the number of antennas that is needed to observe channel hardening depends strongly on the propagation environment. There are also experimental papers that quantify channel hardening, e.g. \cite{gunnarsson2020channel}, where the authors show that assumption of uncorrelated Rayleigh fading is overoptimistic and Massive MIMO channels are actually spatially correlated.  
	\textcolor{black}{With spatially correlated fading, one might need hundreds of antennas to achieve the same channel hardening level as uncorrelated Rayleigh fading \cite[Fig.~2.7]{bjornson2017massive}.} 
	Estimating the effective downlink channel gain using the mean value will result in a significant SE loss when the channel hardening level is low (or non-existent) \cite{ngo2017no}. 
	
	Another approach is that the BSs transmit downlink pilots to assist the users in estimating the effective downlink channel gains \cite{ngo2013massive}. These pilots can be beamformed using the same precoding as the downlink data, thus the required downlink pilot 
resources are proportional to the number of users, but will anyway increase the total overhead for CSI acquisition. \textcolor{black}{While the use of downlink pilots will improve the estimates of the effective downlink channel gains when the number of users is small or significantly less than the dimensions of the coherence intervals, the SE might decrease due to the extra overhead when the system simultaneously serves many users  \cite{ngo2017no}. Downlink pilots is particularly cumbersome in multi-cell systems since there will be pilot contamination and the number of pilots must be dimensioned for the cells with the largest number of active users.}

	There exists one more approach, namely blind channel estimation \cite{ngo2015blind}. In this approach, the effective downlink channel gains are estimated using the downlink data signals. The enabling factor is that the precoding is selected to make the effective downlink channel gains (approximately) positive and real-valued, so that only the amplitude must be estimated.  \textcolor{black}{ An estimation method of this kind was developed for single-cell Massive MIMO systems in	\cite{ngo2017no,ngo2015blind}, but without considering pilot contamination and spatial correlation.}	
 
	The method was motivated by asymptotic arguments that are generally not satisfied in Massive MIMO, but was shown to provide better estimates than when using the mean of the effective channel gains. The difference was substantial in scenarios without channel hardening. The method also performed better than the use of downlink pilots since the blind channel estimation method does not need any extra pilot overhead. The blind estimation method was generalized in  \cite{pasangi2020blind} to  a multi-cell Massive MIMO network with uncorrelated Rayleigh fading channel and maximum ratio (MR) precoding at the BS. {\color{black} In this paper, we propose a general multi-cell framework that supports correlated Rayleigh fading, arbitrary linear precoding, and generic pilot assignment among the users.}

Blind estimation builds on identifying a mapping between received data signals and the variable that is to be estimated. This is an intrinsically complicated problem and the man-made solutions often rely on asymptotic arguments, where parameters in the system model are taken towards infinity to untangle the relationships \cite{muller2014blind,amiri2017blind,ghavami2017blind,shin2007blind,de1997asymptotic,de1997cramer}. There is no guarantee that such algorithms will yield optimal estimation quality.
	The deep learning methodology has the potential to find mappings between input data and desired variables by learning them from data \cite{o2017introduction,van2020power,sanguinetti2018deep}.
	Data-driven approaches based on deep learning algorithms have been proposed to solve various physical layer and resource allocation problems in wireless communications \cite{o2017introduction,sadeghi2018adversarial}. For example, neural networks have been trained to achieve efficient interference management \cite{Sun2018a}, power control \cite{sanguinetti2018deep}, and channel estimation \cite{demir2019channel}. 
	Deep learning methods can be particularly useful to solve problems for which 1) we do not have good physical models that enable model-based algorithmic development or 2) we have good algorithms but these are computationally intractable  \cite{bjornson2020two}. For example, one application of deep learning can be the real-time implementation of high computational complexity iterative algorithms \cite{Balatsoukas2019}.
	The drawback of neural networks is the complicated training process and limited explainability, thus neural networks might only be practically useful in situations where the performance gains are substantial.
	\textcolor{black}{In this paper, we use neural networks to develop a data-driven blind channel estimator of the downlink effective scalar channel gain in cellular TDD Massive MIMO systems. It is the first time the data-driven approach is investigated for this purpose. However, a few previous works have dealt with the downlink channel estimation for FDD systems \cite{9347820}, which is a fundamentally different problem since the entire multi-antenna channel is to be estimated (to enable feedback and precoding computation) and not the effective channel gain after precoding has been applied.}

	The main contributions of this manuscript are summarized as follows:
	\begin{itemize}
	
		\item \textcolor{black}{We propose a framework for blind estimation of the downlink effective channel gain in multi-cell Massive MIMO networks, which supports an arbitrary channel model spanning from non-isotropic to isotropic scattering environments, arbitrary precoding, and different pilot reuse factors.}
		
		\item We propose a model-aided estimator that utilizes channel statistics. The estimator is motivated by asymptotic arguments and the asymptotic performance is evaluated analytically. We compute all terms in closed form
		when using MR precoding along with correlated Rayleigh fading and when using zero-forcing (ZF) precoding along with uncorrelated Rayleigh fading. An SE expression is obtained to quantify the achievable downlink performance.

		\item {\color{black} We propose a data-driven blind estimator for finite-sized systems based on a fully-connected neural network.
		We train the neural network to estimate the downlink effective channel gains from input features such as the pathloss, propagation environment, and average power of the received signal at the user.}
		
		\item We provide numerical results that manifest the proposed estimation methods' effectiveness in terms of both the normalized mean square error (NMSE) and the downlink SE. We evaluate under what channel conditions  the gains are large compared to the state-of-the-art.
			\end{itemize}
	
The remainder of this paper is organized as the following: Section \ref{sec:system-model} presents the system model for multi-cell massive MIMO network along with the uplink pilot and downlink data transmissions analysis. Section \ref{sec:modelbased}, presents a model-based approach for estimation of downlink effective channel gains. In Section \ref{sec:SE}, we present the analysis of ergodic spectral efficiency for downlink data transmission and asymptotic analysis of model-based estimator. Section \ref{sec:datadriven} is dedicated to present a data-driven approach for estimation of downlink effective channel gains. Numerical results are presented and analyzed in Section \ref{sec:numerical-results}. Finally, the main conclusions of this paper are presented in Section \ref{sec:conclusion}.

	\textbf{Notation:} In this paper, we use boldface lower case to indicate column vectors $\mathbf{x}$ and boldface upper case is used for matrices, $\mathbf{X}$. An identity matrix with size $M$ is denoted as $\mathbf{I}_M$. The conjugate transpose of $\mathbf{X}$ is denoted as $\mathbf{X}^{{\rm H}}$.  Furthermore, the operators $\mathbb{E}\{\boldsymbol{\cdot}\}$ and $\mathrm{var}\{\boldsymbol{\cdot}\}$ denote the expectation and variance of a random variable, respectively. The notation $\|\mathbf{x}\|$ stands for the Euclidean 
	norm of the vector $\mathbf{x}$. The notation $ \mathcal{CN}({\mathbf{0}},{\mathbf{R}})$ is used for the (multi-variate) circularly symmetric complex Gaussian distribution with correlation matrix ${\mathbf{R}}$. Convergence in probability is denoted as $\xrightarrow{P}$. The corresponding asymptotic equivalence between two functions $f(K)$ and $g(K)$ is denoted as
$f(K) \asymp  g(K)$, and holds if $f(K) -  g(K) \xrightarrow{P} 0$ when  $K\rightarrow \infty$.

	\section{System Model}\label{sec:system-model}
	In this paper, we consider a multi-cell Massive MIMO system consisting of $L$ cells. Each of the cells has a BS equipped with $M$ antennas and serves $K$ single-antenna users. The wireless propagation channels vary over time and frequency, which is modeled by the standard block-fading channel model \cite[Sec.~2]{redbook}. In this model, the time-frequency resources are divided into coherence intervals where all the channels are (approximately) static and frequency flat. The size of a coherence interval (in number of transmission symbols) is denoted $\tau_c$.
	The channels change independently from one coherence interval to the next according to a stationary ergodic random process.  {\color{black} The channel between BS~$l$ and user~$k'$ in cell~$l'$, $\mathbf{g}_{l'k'}^l \in \mathbb{C}^M$, which follows a correlated Rayleigh fading model:
		\begin{equation}
		\mathbf{g}_{l'k'}^l \sim \mathcal{CN}\left({\mathbf{0}}, \mathbf{R}_{l'k'}^l \right),
		\end{equation}
		where ${\mathbf{R}}^{l}_{l'k'} \in \mathbb{C}^{M\times M}$ is the positive semi-definite spatial correlation matrix of the channel.\footnote{\textcolor{black}{We assume that the spatial correlation matrices are known in this paper for the sake of simplicity. In practical systems, the correlation matrices can be estimated by averaging  over many different instantaneous channel realizations. We refer to \cite{Sanguinetti2019az} for a recent review of such estimation methods.} } We define $\beta^{l}_{l'k'}  = \frac{\mathrm{tr}\left(\mathbf{R}^{l}_{l'k'}\right)}{M}$, where $\beta^{l}_{l'k'} \geq 0$ is the corresponding average large-scale fading coefficient among the $M$ antennas. The elements of $\mathbf{R}^{l}_{l'k'}$ are also determined by the array geometry and angular distribution of the multipath components \cite{bjornson2017massive}. A special case is independent and identically distributed (i.i.d.) Rayleigh fading with $\mathbf{R}^{l}_{l'k'} = \beta^{l}_{l'k'} \mathbf{I}_M$, which can be obtained in isotropic scattering environments but seldom occur in practice.}
	 
	We focus on the downlink data transmission in a network operating with a TDD protocol. Since a new independent channel realization appears in every coherence interval, the BSs need to estimate them once per coherence interval.
	More precisely, each BS estimates the channels based on pilots transmitted during the uplink training phase and prior knowledge of the channel statistics. This procedure is explained in the next subsection. 
	
	\subsection{Uplink Pilot Training}
	To enable spatial multiplexing in the downlink, the BS must estimate the channel to the intra-cell users in every coherence interval and construct precoding vectors based on them. {\color{black} In Massive MIMO, this is achieved by letting the users transmit uplink pilots since an arbitrarily large number of BS antennas can then be supported. If the network contains many cells, the same pilot sequences may have to be reused in several cells. This results in pilot contamination, which we will combat by exploiting the user-specific spatial correlation
through MMSE estimation \cite{BjornsonHS17} and a pilot reuse factor of $f \geq 1$. The latter means the users within each cell utilize mutually orthogonal pilots, and the same pilot sequences are reused in a fraction $1/f$ of the $L$ cells in the network.}

	To achieve this, we assume there is a set of $\tau_p = fK $ mutually orthogonal pilot sequences, each of length $\tau_p$. 
	The pilot sequence used by user $k$ in cell $l$ is denoted as $\pmb{\phi}_{lk} \in \mathbb{C}^{\tau_p}$ and satisfies $\| \pmb{\phi}_{lk} \|^2 = \tau_p$. 
	We let $\mathcal{P}_{l}$ denote the set of cells sharing the same subset of $K$ orthogonal pilot sequences as cell $l$. When two cells use the same subset of pilots, the users are numbered so that user $k$ in both cells use the same pilot. Hence, it follows that the inner product of two users' pilot sequences are
		\begin{equation}
	\pmb{\phi}^{\rm H}_{lk} \pmb{\phi}_{l'k'} = \begin{cases}
	\tau_p, & \mbox{if } k = k'~\text{and}~l' \in \mathcal{P}_{l} , \\
	0,  & \mbox{otherwise}.
	\end{cases}
	\end{equation} 
	The received uplink signal $\mathbf{Y}_l \in \mathbb{C}^{M \times \tau_p }$ at BS $l$ during the uplink pilot transmission is
	\begin{equation}
	\mathbf{Y}_l = \sum_{l'=1}^L \sum_{k'=1}^K \sqrt{\hat{p}_{l'k'}} \mathbf{g}_{l'k'}^l \pmb{\phi}_{l'k'}^{\rm H} + \mathbf{N}_l, 
	\end{equation}
	where $\hat{p}_{l'k'}$ is the pilot power used by user $k'$ in cell $l'$ and the additive noise matrix $\mathbf{N}_l \in \mathbb{C}^{M \times \tau_p }$ with independent  complex Gaussian distributed entries: $\mathcal{CN}(0, \sigma_{\mathrm{UL}}^2)$. 
	A first step when BS $l$ wants to estimate the channel of user $k$ in cell $l$ is to correlate its received signal with the user's pilot sequence:
	\begin{equation} \label{eq:Observe1}
	\tilde{\mathbf{y}}_{lk} =  \mathbf{Y}_l \pmb{\phi}_{lk} =\tau_p \sqrt{\hat{p}_{lk}} \mathbf{g}_{lk}^l + \sum\limits_{l' \in \mathcal{P}_{l} \setminus \{ l \}}  \tau_p \sqrt{\hat{p}_{l'k}} \mathbf{g}_{l'k}^l + \tilde{\mathbf{n}}_{lk},
	\end{equation}
	where $\tilde{\mathbf{n}}_{lk} = \mathbf{N}_l \pmb{\phi}_{lk} \sim \mathcal{CN}(\mathbf{0}, \tau_p \sigma_{\mathrm{UL}}^2 \mathbf{I}_M )$ is independent of the cell and user indices. The MMSE channel estimate of $\mathbf{g}_{lk}^l$ is presented in the following lemma.
	
	\begin{lemma} \label{lemma:ChannelEstCorrelated}
		When BS~$l$ uses MMSE estimation based on the observation \eqref{eq:Observe1}, the estimate of the channel between user~$k$ in cell~$l$ is
		\begin{equation}
		\hat{\mathbf{g}}_{lk}^l = \sqrt{\hat{p}_{lk}} \mathbf{R}_{lk}^l \pmb{\Psi}_{lk}^{-1} \tilde{\mathbf{y}}_{lk},
		\end{equation}
		where $\pmb{\Psi}_{lk} = \sum\limits_{l'\in \mathcal{P}_l} \tau_p  \hat{p}_{l'k} \mathbf{R}_{l'k}^l + \sigma_{\mathrm{UL}}^2 \mathbf{I}_M $ and $\hat{\mathbf{g}}_{lk}^l$ is distributed as $\hat{\mathbf{g}}_{lk}^l \sim \mathcal{CN} \left(\mathbf{0}, \tau_p \hat{p}_{lk} \mathbf{R}_{lk}^l \pmb{\Psi}_{lk}^{-1} \mathbf{R}_{lk}^l \right)$. Meanwhile, the channel estimation error  $\mathbf{e}_{lk}^l = \mathbf{g}_{lk}^l - \hat{\mathbf{g}}_{lk}^l$ is independently distributed as $\mathbf{e}_{lk}^l \sim \mathcal{CN} \left( \mathbf{0}, \mathbf{C}_{lk}^l \right)$, where we define $\mathbf{C}_{lk}^l = \mathbf{R}_{lk}^l - \tau_p \hat{p}_{lk} \mathbf{R}_{lk}^l \pmb{\Psi}_{lk}^{-1} \mathbf{R}_{lk}^l$.
	\end{lemma}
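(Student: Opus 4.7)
The plan is to recognize that $\mathbf{g}_{lk}^l$ and $\tilde{\mathbf{y}}_{lk}$ are jointly circularly symmetric complex Gaussian with zero mean, since $\tilde{\mathbf{y}}_{lk}$ is an affine combination (with deterministic coefficients) of the Gaussian channel vectors $\{\mathbf{g}_{l'k}^l\}$ and the Gaussian noise $\tilde{\mathbf{n}}_{lk}$. Under joint Gaussianity, the MMSE estimator coincides with the LMMSE estimator, so I would apply the standard formula $\hat{\mathbf{g}}_{lk}^l = \mathbf{R}_{\mathbf{g}\tilde{\mathbf{y}}}\,\mathbf{R}_{\tilde{\mathbf{y}}\tilde{\mathbf{y}}}^{-1}\tilde{\mathbf{y}}_{lk}$, reducing the proof to two covariance computations and a distributional identification.

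First, I would compute the cross-covariance. Exploiting mutual independence of the channel vectors across distinct $(l',k')$ pairs and of the channels with the noise, only the term containing $\mathbf{g}_{lk}^l$ survives, giving $\mathbb{E}\{\mathbf{g}_{lk}^l \tilde{\mathbf{y}}_{lk}^{\rm H}\} = \tau_p\sqrt{\hat p_{lk}}\,\mathbf{R}_{lk}^l$. Next, I would compute the observation covariance by summing the contributions of the coherent interferers in $\mathcal{P}_l$ (the terms with $l'\notin\mathcal{P}_l$ are absent from \eqref{eq:Observe1}) and of the noise, obtaining $\mathbb{E}\{\tilde{\mathbf{y}}_{lk}\tilde{\mathbf{y}}_{lk}^{\rm H}\} = \tau_p^{2}\sum_{l'\in\mathcal{P}_l}\hat p_{l'k}\mathbf{R}_{l'k}^l + \tau_p\sigma_{\mathrm{UL}}^2\mathbf{I}_M = \tau_p\,\pmb{\Psi}_{lk}$. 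Plugging these into the LMMSE formula and cancelling one factor of $\tau_p$ yields the claimed expression $\hat{\mathbf{g}}_{lk}^l=\sqrt{\hat p_{lk}}\,\mathbf{R}_{lk}^l\pmb{\Psi}_{lk}^{-1}\tilde{\mathbf{y}}_{lk}$.

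For the distribution of $\hat{\mathbf{g}}_{lk}^l$, I would note that it is linear in the Gaussian vector $\tilde{\mathbf{y}}_{lk}$, hence itself zero-mean circularly symmetric complex Gaussian, with covariance
\begin{equation*}
\hat p_{lk}\mathbf{R}_{lk}^l\pmb{\Psi}_{lk}^{-1}\bigl(\tau_p\pmb{\Psi}_{lk}\bigr)\pmb{\Psi}_{lk}^{-1}\mathbf{R}_{lk}^l=\tau_p\hat p_{lk}\mathbf{R}_{lk}^l\pmb{\Psi}_{lk}^{-1}\mathbf{R}_{lk}^l,
\end{equation*}
which matches the stated law. The error covariance $\mathbf{C}_{lk}^l$ then follows by the orthogonality principle: $\mathrm{Cov}(\mathbf{e}_{lk}^l) = \mathbf{R}_{lk}^l - \mathrm{Cov}(\hat{\mathbf{g}}_{lk}^l)$, which directly gives $\mathbf{R}_{lk}^l - \tau_p\hat p_{lk}\mathbf{R}_{lk}^l\pmb{\Psi}_{lk}^{-1}\mathbf{R}_{lk}^l$.

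Finally, for the independence of $\mathbf{e}_{lk}^l$ and $\hat{\mathbf{g}}_{lk}^l$, I would invoke the Gaussian-specific fact that uncorrelatedness implies independence. The orthogonality principle guarantees $\mathbb{E}\{\mathbf{e}_{lk}^l\tilde{\mathbf{y}}_{lk}^{\rm H}\}=\mathbf{0}$, and since $\hat{\mathbf{g}}_{lk}^l$ is a deterministic linear function of $\tilde{\mathbf{y}}_{lk}$, we also get $\mathbb{E}\{\mathbf{e}_{lk}^l(\hat{\mathbf{g}}_{lk}^l)^{\rm H}\}=\mathbf{0}$; jointly with their joint Gaussianity this yields independence. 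There is no genuine obstacle here, the only thing to be careful about is the pilot-reuse bookkeeping in \eqref{eq:Observe1}, ensuring that exactly the cells in $\mathcal{P}_l$ (and only those) contribute to $\pmb{\Psi}_{lk}$, and tracking the single versus squared factors of $\tau_p$ that arise from $\|\pmb{\phi}_{lk}\|^2=\tau_p$ in the cross- and auto-covariances.
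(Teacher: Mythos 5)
Your proposal is correct: it is exactly the standard MMSE/LMMSE derivation (cross-covariance $\tau_p\sqrt{\hat p_{lk}}\,\mathbf{R}_{lk}^l$, observation covariance $\tau_p\pmb{\Psi}_{lk}$, Gaussianity plus the orthogonality principle for the error statistics and independence) that the paper itself invokes by citing standard MMSE estimation and \cite[Theorem~3.1]{bjornson2017massive}. You have simply written out explicitly the computation the paper delegates to the reference, including the correct bookkeeping of the cells in $\mathcal{P}_l$ and the factors of $\tau_p$.
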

	\begin{proof}
		The results are obtained by adopting the standard MMSE estimation \cite{kay1993fundamentals} to our system model. The detailed derivation and proof can be find in \cite[Theorem~3.1]{bjornson2017massive}.
	\end{proof}
	
	Lemma~\ref{lemma:ChannelEstCorrelated} gives the exact expression of the channel estimates together with the statistical information. 
When discussing the new analytical results and algorithms, we will sometimes consider the i.i.d.~Rayleigh fading case with ${\mathbf{R}}^{l}_{l'k'} = \beta^{l}_{l'k'} {\mathbf I}_{M}$ to obtain closed-form expressions. In this case, the MMSE estimator in Lemma~\ref{lemma:ChannelEstCorrelated} simplifies as follows.

	\begin{corollary} \label{lemma:ChannelEst}
	If ${\mathbf{R}}^{l}_{l'k'} = \beta^{l}_{l'k'} {\mathbf I}_{M}$, the MMSE estimate of channel between user~$k$ in cell~$l$ and BS~$l$ is
		\begin{equation} \label{eq:chRay}
		\hat{\mathbf{g}}_{lk}^l = \frac{\sqrt{\hat{p}_{lk}} \beta_{lk}^l }{ \tau_p \sum\limits_{l' \in \mathcal{P}_l} \hat{p}_{l'k} \beta_{l'k}^l + \sigma_{\mathrm{UL}}^2 } \tilde{\mathbf{y}}_{lk},
		\end{equation}
		and is distributed as $\hat{\mathbf{g}}_{lk}^l \sim \mathcal{CN}(\mathbf{0}, \gamma_{lk}^l \mathbf{I}_M)$ with the variance
		\begin{equation} \label{eq:gammalkl}
		\gamma_{lk}^l = \frac{\tau_p \hat{p}_{lk} \left(\beta_{lk}^l\right)^2 }{ \tau_p \sum\limits_{l'\in \mathcal{P}_l} \hat{p}_{l'k} \beta_{l'k}^l + \sigma_{\mathrm{UL}}^2 }.
		\end{equation}
		The channel estimation error is  
		distributed as $\mathbf{e}_{lk}^l \sim \mathcal{CN}(\mathbf{0}, (\beta_{lk}^l - \gamma_{lk}^l) \mathbf{I}_M)$. 
	\end{corollary}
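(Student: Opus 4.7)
The plan is to obtain the result as a direct specialization of Lemma~\ref{lemma:ChannelEstCorrelated}, since the assumption $\mathbf{R}^{l}_{l'k'} = \beta^{l}_{l'k'} \mathbf{I}_M$ forces every spatial correlation matrix appearing in that lemma to be a scalar multiple of $\mathbf{I}_M$, collapsing the matrix algebra to scalar arithmetic.

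First I would substitute $\mathbf{R}^{l}_{l'k}= \beta^{l}_{l'k}\mathbf{I}_M$ into the definition of $\pmb{\Psi}_{lk}$ in Lemma~\ref{lemma:ChannelEstCorrelated} to get
\begin{equation*}
\pmb{\Psi}_{lk} = \Bigl(\tau_p \sum_{l'\in \mathcal{P}_l} \hat{p}_{l'k}\, \beta_{l'k}^l + \sigma_{\mathrm{UL}}^2\Bigr)\mathbf{I}_M,
\end{equation*}
so that $\pmb{\Psi}_{lk}^{-1}$ is simply the reciprocal of the scalar factor, times $\mathbf{I}_M$. Plugging this along with $\mathbf{R}_{lk}^l = \beta_{lk}^l \mathbf{I}_M$ into the estimator $\hat{\mathbf{g}}_{lk}^l = \sqrt{\hat{p}_{lk}}\,\mathbf{R}_{lk}^l\,\pmb{\Psi}_{lk}^{-1}\,\tilde{\mathbf{y}}_{lk}$ immediately yields the scalar-weighted form \eqref{eq:chRay}.

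Next I would derive the distribution. Since the general covariance of $\hat{\mathbf{g}}_{lk}^l$ given in Lemma~\ref{lemma:ChannelEstCorrelated} is $\tau_p \hat{p}_{lk}\, \mathbf{R}_{lk}^l \pmb{\Psi}_{lk}^{-1} \mathbf{R}_{lk}^l$, substituting the i.i.d.\ forms collapses the product to
\begin{equation*}
\frac{\tau_p \hat{p}_{lk}\,(\beta_{lk}^l)^2}{\tau_p \sum_{l'\in \mathcal{P}_l} \hat{p}_{l'k}\,\beta_{l'k}^l + \sigma_{\mathrm{UL}}^2}\,\mathbf{I}_M = \gamma_{lk}^l\,\mathbf{I}_M,
\end{equation*}
matching \eqref{eq:gammalkl}. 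Circular symmetry is inherited from the general case since the estimator is a deterministic linear mapping of the complex Gaussian observation $\tilde{\mathbf{y}}_{lk}$.

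Finally, for the error statistics, I would apply $\mathbf{C}_{lk}^l = \mathbf{R}_{lk}^l - \tau_p \hat{p}_{lk}\,\mathbf{R}_{lk}^l \pmb{\Psi}_{lk}^{-1} \mathbf{R}_{lk}^l$ from Lemma~\ref{lemma:ChannelEstCorrelated}, which by the same scalar reduction becomes $(\beta_{lk}^l - \gamma_{lk}^l)\mathbf{I}_M$; the independence of $\mathbf{e}_{lk}^l$ from $\hat{\mathbf{g}}_{lk}^l$ is a direct consequence of the MMSE orthogonality principle in the jointly Gaussian setting and is already established in Lemma~\ref{lemma:ChannelEstCorrelated}. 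There is no real obstacle here: the entire argument is substitution and scalar simplification, and the only point that warrants a sentence of justification is the commutativity being used to collapse the matrix product, which holds trivially because every matrix involved is a multiple of $\mathbf{I}_M$.
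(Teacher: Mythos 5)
Your proposal is correct and follows exactly the route the paper intends: the corollary is obtained by substituting $\mathbf{R}^{l}_{l'k'} = \beta^{l}_{l'k'}\mathbf{I}_M$ into Lemma~\ref{lemma:ChannelEstCorrelated}, so that $\pmb{\Psi}_{lk}$ becomes a scalar times $\mathbf{I}_M$ and the estimator, its covariance $\gamma_{lk}^l\mathbf{I}_M$, and the error covariance $(\beta_{lk}^l-\gamma_{lk}^l)\mathbf{I}_M$ all follow by scalar simplification. Nothing is missing.
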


Pilot contamination occurs when two users use the same pilot sequence. The mutual interference leads to increased estimation errors but also an inability to separate the users' channels. The latter is particularly clear for i.i.d.~Rayleigh fading because the MMSE estimates of the channels of two pilot-sharing users are equal up to a scaling factor:
	\begin{equation} \label{eq:ChannelEstRelation}
	\hat{\mathbf{g}}_{l'k}^l = \frac{\sqrt{\hat{p}_{l'k} } \beta_{l'k}^l  }{\sqrt{\hat{p}_{lk} } \beta_{lk}^l } \hat{\mathbf{g}}_{lk}^l,
	\end{equation}
where $\hat{\mathbf{g}}_{lk}^l$ is defined in \eqref{eq:chRay} and both $l$ and $l'$ cells are sharing the same set of pilots.

\subsection{Downlink Data Transmission}
The channel estimates and their statistics are used to formulate the downlink precoding vectors. Since we focus on the downlink transmission, we assume that the remaining $\tau_{c} -\tau_{p}$ transmission symbols per coherence interval are used for downlink data. Let us denote by $s_{lk}[n]$ the $n$-th data symbol that BS $l$ transmits to user~$k$ in cell~$l$, where  $n$ is an index from $1$ to $\tau_{c} -\tau_{p}$. The data symbols have zero mean and normalized power: $\mathbb{E} \{ |s_{lk}[n]|^2 \} = 1$. By assigning a linear precoding vector $\mathbf{w}_{lk} \in \mathbb{C}^M$ to the user, the signal that BS~$l$ sends to the $K$ users in cell~$l$ is
	\begin{equation}
	\mathbf{x}_{l}[n] = \sum_{k=1}^K \sqrt{\rho_{\rm dl}\eta_{lk}} \mathbf{w}_{lk}s_{lk}[n],
	\end{equation}
	where $\rho_{\rm dl}$ is the maximum downlink transmit power and $\eta_{lk} \in [0,1]$ is power allocation coefficient that determines the power fraction assigned to user~$k$ in cell~$l$. We consider an arbitrary fixed selection $\eta_{l1},\ldots,\eta_{lK}$ in every cell but note that it must be selected such that $\sum_{k=1}^{K} \eta_{lk} \leq 1$ that should hold for all $l$ cells, to comply with the maximum downlink transmit power limit. 
	
	The received signal $y_{lk} [n]$ at user~$k$ in cell~$l$ is
	\begin{equation} \label{eq:ReceivedSig}
	\begin{aligned}
	y_{lk} [n] &= \sum_{l'=1}^L \left(\mathbf{g}_{lk}^{l'}\right)^{\rm H} \mathbf{x}_{l'} [n] + \tilde{w}_{lk}[n]  \\
	&= \sqrt{\rho_{\rm dl}\eta_{lk}} \left(\mathbf{g}_{lk}^{l}\right)^{\rm H} \mathbf{w}_{lk}s_{lk}[n] + \sum\limits_{\substack{k'=1, k' \neq k}}^K  \sqrt{\rho_{\rm dl}\eta_{lk'}} \left(\mathbf{g}_{lk}^{l}\right)^{\rm H} \mathbf{w}_{lk'}s_{lk'}[n] \\
	&+ \sum\limits_{\substack{l'=1,l' \neq l}}^L \sum_{k'=1}^K \sqrt{\rho_{\rm dl}\eta_{l'k'}} \left(\mathbf{g}_{lk}^{l'}\right)^{\rm H} \mathbf{w}_{l'k'}s_{l'k'}[n]  + \tilde{w}_{lk}[n],
	\end{aligned}
	\end{equation}
	where $\tilde{w}_{lk}[n] \sim \mathcal{CN}(0, \sigma_{\mathrm{DL}}^2)$ is the additive noise. In the last expression of \eqref{eq:ReceivedSig}, the first term is the desired signal for user~$k$ in cell~$l$ and the second term is the intra-cell interference. The remaining terms are the inter-cell interference and noise. Each of the signal and interference terms contain a data symbol multiplied with an effective downlink channel gain of the type
	\begin{equation} \label{eq:alphaVal}
	\alpha_{lk}^{l'k'} = \sqrt{\rho_{\rm dl}} \left(\mathbf{g}_{lk}^{l'}\right)^{\rm H} \mathbf{w}_{l'k'}.
	\end{equation}
	Using this notation, 
	we can then rewrite \eqref{eq:ReceivedSig}  as
	\begin{equation}\label{eq:ReceivedSigv1}
	y_{lk} [n] = \sqrt{\eta_{lk}}\alpha_{lk}^{lk} s_{lk}[n] + \sum\limits_{\substack{k'=1, k' \neq k}}^K  \sqrt{\eta_{lk'}}\alpha_{lk}^{lk'} s_{lk'}[n]
	+ \sum\limits_{\substack{l'=1, l' \neq l}}^L \sum_{k'=1}^K \sqrt{\eta_{l'k'}}\alpha_{lk}^{l'k'} s_{l'k'}[n]  + \tilde{w}_{lk}[n].
	\end{equation}
	To decode the desired signal $s_{lk}[n]$, user~$k$ in cell~$l$ should preferably know the effective channel gain $\alpha_{lk}^{lk}$
	and the average power of the remaining interference-plus-noise terms.
	Learning the effective channel gain $\alpha_{lk}^{lk}$ is the most critical issue since its value changes in every coherence interval, thus an efficient downlink channel estimation procedure is needed.
	One option is to spend a part of the coherence interval on transmitting downlink pilots \cite{ngo2013massive}. 
	Another option is to utilize the structure created by the fact that the precoding vector is computed based on an MMSE estimate of $ \mathbf{g}_{lk}^{l} $. Although $\mathbb{E}\{ \mathbf{g}_{lk}^{l} \} =\mathbf{0}$, we have $\mathbb{E}\{ \alpha_{lk}^{lk} \} > 0$ for most precoding schemes, thus a basic estimate of $\alpha_{lk}^{lk}$ is its mean value $\mathbb{E}\{ \alpha_{lk}^{lk} \}$ \cite{Marzetta2006a}.
	The latter solution is attractive in Massive MIMO systems where the channel hardening property implies that $\alpha_{lk}^{lk}$ is relatively close to its mean value \cite{jose2011pilot,Marzetta2006a}.
The drawback with these solutions are the extra pilot overhead and the substantial performance reduction in the high-SNR regime, respectively. Consequently, this paper develops improved methods for downlink channel estimation which are blind (i.e., does not require {\color{black} explicit downlink  pilot signals}) and are either model-aided or designed using deep learning.

	\section{Model-based Estimation of the Effective Downlink Channel Gain}\label{sec:modelbased}
	
	We want to estimate the realization of the effective downlink channel gain $\alpha_{lk}^{lk} = \sqrt{\rho_{\rm dl}} \left(\mathbf{g}_{lk}^{l}\right)^{\rm H} \mathbf{w}_{lk}$ in \eqref{eq:alphaVal}, for each user $k$ in a given cell $l$, without transmitting explicit downlink pilots. To this end, we notice that many precoding schemes make use of precoding vectors of the type 
	\begin{equation} \label{eq:precoding-structure}
	\mathbf{w}_{lk} = \mathbf{A}_{lk} \hat{\mathbf{g}}_{lk}^{l},
	\end{equation}
which involves a positive definite matrix $\mathbf{A}_{lk}$ and the channel estimate $ \hat{\mathbf{g}}_{lk}^{l}$ \cite{bjornson2017massive}. For example, MR precoding is obtained when $\mathbf{A}_{lk}$ is a scaled identity matrix.
For precoding schemes of this type, the effective downlink channel gain becomes
\begin{equation}
\alpha_{lk}^{lk} = \sqrt{\rho_{\rm dl}} \left(\mathbf{g}_{lk}^{l}\right)^{\rm H}  \mathbf{A}_{lk} \hat{\mathbf{g}}_{lk}^{l} =  \underbrace{\sqrt{\rho_{\rm dl}} \left(\hat{\mathbf{g}}_{lk}^{l}\right)^{\rm H}  \mathbf{A}_{lk} \hat{\mathbf{g}}_{lk}^{l}}_{>0} +  \underbrace{\sqrt{\rho_{\rm dl}} \left(\mathbf{e}_{lk}^{l}\right)^{\rm H}  \mathbf{A}_{lk} \hat{\mathbf{g}}_{lk}^{l}}_{\approx 0},
\end{equation}
where the second term is almost zero (relative to the first term) since the estimation error $\mathbf{e}_{lk}^{l}$ is independent of the estimate, and both have zero mean. This is particularly true when having many antennas due to the effect known as channel hardening.
Hence, even if the user does not know $\alpha_{lk}^{lk}$, it knows that it should estimate an approximately positive real-valued number. This can be done by inspecting the average power of the received downlink signals in the coherence interval, without the need for explicit pilots \cite{ngo2017no}. Note that downlink pilots would have been needed if there was also an unknown phase in  $\alpha_{lk}^{lk}$ but it is alleviated by the coherent precoding. {\color{black}In this section, we will develop such a blind channel estimation by extending the single-cell scheme from \cite{ngo2017no}  to cellular Massive MIMO communications consists of multiple cells.}

	\subsection{Proposed Blind Channel Estimation with an Arbitrary Precoding Technique}
	
	We will outline and motivate the proposed scheme for estimating the effective channel gain $\alpha_{lk}^{lk}$ of user~$k$ in cell~$l$. 
	As a first step, the user can compute the arithmetic sample mean of the received signal power in the current coherence interval:
	\begin{equation} \label{eq:Xilko}
	\xi_{lk}  = \frac{\sum_{n=1}^{\tau_c - \tau_p} \left|y_{lk} [n]\right|^2 }{\tau_c - \tau_p}.
	\end{equation}
	The data signals and noise take new realizations for every $n$, thus we can expect that these sources of randomness will be averaged out in $\xi_{lk}$ when the length of the coherence interval is large. This result can be formalized mathematically as follows.

	\begin{lemma}\label{Lemma:asymptotic}
		As $\tau_c \rightarrow \infty$ (for a fixed $ \tau_p $), $\xi_{lk}$ in \eqref{eq:Xilko} converges in probability as follows:
		\begin{equation} \label{eq:Xilk}
		\xi_{lk} \xrightarrow{P} \left( \eta_{lk}\left|\alpha_{lk}^{lk}\right|^2 + \sum\limits_{\substack{k'=1,\\ k' \neq k}}^K  \eta_{lk'}\left|\alpha_{lk}^{lk'}\right|^2 + \sum\limits_{\substack{l'=1,\\ l' \neq l}}^L \sum_{k'=1}^K \eta_{l'k'}\left|\alpha_{lk}^{l'k'}\right|^2 + \sigma_{\mathrm{DL}}^2  \right).
		\end{equation}
	\end{lemma}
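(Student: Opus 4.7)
The plan is to conclude by the weak law of large numbers applied conditionally on the channel realizations. The key observation is that, once the channels $\{\mathbf{g}_{lk}^{l'}\}$ and therefore the effective gains $\{\alpha_{lk}^{l'k'}\}$ are fixed, the samples $|y_{lk}[n]|^2$ for $n=1,\ldots,\tau_c-\tau_p$ are i.i.d.\ in $n$, because the data symbols $s_{l'k'}[n]$ and the noise $\tilde w_{lk}[n]$ are independent across the symbol index $n$ (and across users), with zero mean, unit power, and noise variance $\sigma_{\mathrm{DL}}^2$.

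First I would substitute the expression \eqref{eq:ReceivedSigv1} for $y_{lk}[n]$ into \eqref{eq:Xilko}. Expanding $|y_{lk}[n]|^2$ yields the diagonal terms $\eta_{l'k'}|\alpha_{lk}^{l'k'}|^2|s_{l'k'}[n]|^2$ summed over $(l',k')$, plus cross terms of the form $\sqrt{\eta_{l'k'}\eta_{l''k''}}\,\alpha_{lk}^{l'k'}(\alpha_{lk}^{l''k''})^{\ast} s_{l'k'}[n]s_{l''k''}^{\ast}[n]$ for $(l',k')\neq(l'',k'')$, plus noise terms and data-noise cross terms. Conditioning on all channels (and hence all $\alpha$'s) and taking expectation over the data and the noise, the cross terms vanish because the $s_{l'k'}[n]$ have zero mean and are independent across $(l',k')$, while $\mathbb{E}\{|s_{l'k'}[n]|^2\}=1$ and $\mathbb{E}\{|\tilde w_{lk}[n]|^2\}=\sigma_{\mathrm{DL}}^2$. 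This computes the conditional expectation of each $|y_{lk}[n]|^2$ to be exactly the right-hand side of \eqref{eq:Xilk}.

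Next I would invoke the weak law of large numbers: conditional on the channel realizations, $\{|y_{lk}[n]|^2\}_{n}$ is an i.i.d.\ sequence with finite mean (and finite variance, since the data symbols may be assumed to have finite fourth moment and the noise is Gaussian), so the sample average $\xi_{lk}$ converges in probability to its conditional mean as $\tau_c-\tau_p\to\infty$. Because this conditional limit is precisely the deterministic-given-channels expression in \eqref{eq:Xilk}, and the conditional convergence in probability implies unconditional convergence in probability of $\xi_{lk}$ minus the (random) right-hand side to zero, the lemma follows.

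The only delicate point — which I would flag but not belabor — is making sure the convergence is uniform enough over the randomness in the channels to lift from conditional to unconditional convergence in probability. This is handled by the dominated convergence theorem applied to $\Pr(|\xi_{lk}-\text{RHS}|>\varepsilon)=\mathbb{E}\{\Pr(|\xi_{lk}-\text{RHS}|>\varepsilon\mid\text{channels})\}$, where the inner probability tends to zero almost surely and is bounded by $1$. This is the cleanest route and I do not anticipate any substantive obstacles beyond this standard bookkeeping.
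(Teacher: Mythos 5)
Your proof is correct and takes essentially the same route as the paper's: insert \eqref{eq:ReceivedSigv1} into \eqref{eq:Xilko}, observe that conditionally on the channels the cross terms have zero mean and the conditional mean of $|y_{lk}[n]|^2$ equals the right-hand side of \eqref{eq:Xilk}, and conclude by the weak law of large numbers; the paper merely organizes the bookkeeping slightly differently, applying the LLN separately to the signal term, the signal--noise cross terms, and the noise power. Your final dominated-convergence step is a harmless extra, since the paper explicitly interprets the convergence as conditioned on the channel realizations of the coherence interval.
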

	\begin{proof}
		The proof is based on inserting \eqref{eq:ReceivedSigv1} into \eqref{eq:Xilko} and then using the law-of-large-numbers and the mutual independence between signals and noise. The proof is given in Appendix~\ref{Appendix:asymptotic}.
	\end{proof}
	The first term at the right-hand side of \eqref{eq:Xilk} contains the desired channel gain of user~$k$ in cell~$l$, while the second term represents the intra-cell interference. The third term is the inter-cell interference and the last term is the noise variance. Note that the right-hand side of \eqref{eq:Xilk} is constant within a coherence interval but takes different independent realizations in different blocks.	
	Hence, the convergence in probability in \eqref{eq:Xilk} refers to the randomness of the signals and noise, but is conditioned on the channel realizations in the considered coherence interval.
	Our goal is to utilize the asymptotic limit in \eqref{eq:Xilk} to estimate $\alpha_{lk}^{lk}$  from $\xi_{lk}$, but this is an ill-posed estimation problem since there are $LK$ unknowns: $\alpha_{lk}^{l'k'}$, $l'=1,\ldots,L$, $k'=1,\ldots,K$.
	To resolve this issue, we will make use of another asymptotic result, based on the regime where the number of users is large.

	\begin{lemma} \label{lemma:asymptoicLLN}
	Suppose the users are dropped in each cell independently at random according to some common distribution for which $\alpha_{lk}^{l'k'}$ has bounded variance.
	As $\tau_c,K \to \infty$ such that $K/\tau_c \to 0$ and $\tau_p = fK$, we obtain the following asymptotic equivalence:  		
		\begin{equation}\label{eq:aymptotLemma}
		\frac{1}{K }\xi_{lk}  \asymp \frac{1}{K} \left(\eta_{lk}\left|\alpha_{lk}^{lk}\right|^2 + \sum\limits_{\substack{k'=1,\\ k' \neq k}}^K  \eta_{lk'} \mathbb{E} \left\{ \left|\alpha_{lk}^{lk'}\right|^2 \right\}
		+ \sum\limits_{\substack{l'=1,\\ l' \neq l}}^L \sum\limits_{k'=1}^K \eta_{lk'}  \mathbb{E} \left\{ \left|\alpha_{lk}^{l'k'}\right|^2 \right\} \\
		+ \sigma_{\mathrm{DL}}^2\right).   
		\end{equation}
	\end{lemma}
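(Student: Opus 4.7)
The plan is to chain two averaging arguments: first, average out the per-symbol randomness of the data symbols and receiver noise inside a coherence block (a refinement of Lemma~\ref{Lemma:asymptotic}); second, average the remaining interference-power terms over the $K$ users via a law of large numbers. The scaling $K/\tau_c \to 0$ is exactly what is needed so that the first averaging outpaces the growing number of interference contributions, while the second averaging happens in the large-$K$ direction.

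For the first step, I would condition on the sigma-algebra $\mathcal{G}$ generated by all channel realizations in the current coherence block. Given $\mathcal{G}$, the symbols $\{y_{lk}[n]\}_{n}$ from \eqref{eq:ReceivedSigv1} are i.i.d.\ across $n$, with
\begin{equation*}
\mathbb{E}\{\xi_{lk}\mid\mathcal{G}\} = \eta_{lk}|\alpha_{lk}^{lk}|^2 + \sum_{k'\neq k}\eta_{lk'}|\alpha_{lk}^{lk'}|^2 + \sum_{l'\neq l}\sum_{k'=1}^{K}\eta_{l'k'}|\alpha_{lk}^{l'k'}|^2 + \sigma_{\mathrm{DL}}^2,
\end{equation*}
and with $\mathrm{Var}(\xi_{lk}\mid\mathcal{G})$ bounded by a constant times $(\tau_c-\tau_p)^{-1}$, because $|y_{lk}[n]|^2$ is a squared complex Gaussian whose conditional mean stays bounded in $K$ (the constraint $\sum_{k'}\eta_{l'k'}\le 1$ together with bounded per-user $\mathbb{E}\{|\alpha_{lk}^{l'k'}|^2\}$ keeps the total received power bounded in expectation). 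Conditional Chebyshev followed by an outer expectation gives $\xi_{lk}-\mathbb{E}\{\xi_{lk}\mid\mathcal{G}\}\xrightarrow{P} 0$, and dividing by $K$ only accelerates the convergence.

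For the second step, the claim reduces to showing that each of the intra-cell and inter-cell interference sums, divided by $K$, concentrates on the deterministic object on the right-hand side. For each fixed $l'$, the users $k'=1,\ldots,K$ in cell $l'$ are drawn i.i.d.\ from a common distribution, and the bounded-variance hypothesis on $\alpha_{lk}^{l'k'}$ (with a uniform fourth-moment control) bounds $\mathrm{Var}(|\alpha_{lk}^{l'k'}|^2)$ uniformly in $k'$. Expanding $\mathrm{Var}\!\left(\tfrac{1}{K}\sum_{k'}\eta_{l'k'}|\alpha_{lk}^{l'k'}|^2\right)$ as a double sum, the diagonal contribution is $O(1/K)\to 0$; controlling the off-diagonal covariances is the core task.

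The main obstacle will be exactly those off-diagonal covariances. When $l'\in\mathcal{P}_l$, the precoder $\mathbf{w}_{l'k'}$ depends on the pilot observation \eqref{eq:Observe1} at BS~$l'$, which itself involves the observer's own channel $\mathbf{g}_{lk}^{l'}$ as well as the channels of other pilot-sharing users, creating correlations between $|\alpha_{lk}^{l'k'}|^2$ and $|\alpha_{lk}^{l'k''}|^2$ that the naive i.i.d.-user argument misses. I would resolve this by conditioning on the complete observer-side information---namely $\mathbf{g}_{lk}^{l'}$ together with the BS-$l'$ pilot observation $\mathbf{Y}_{l'}$---so that the residual randomness in each $|\alpha_{lk}^{l'k'}|^2$ stems only from the i.i.d.\ draw of user $k'$'s statistics (position, spatial correlation matrix, small-scale fading). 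Conditional independence across $k'$ is thereby restored, Chebyshev applies directly to the centered sum, and an outer expectation delivers the unconditional convergence claimed.
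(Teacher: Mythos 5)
Your overall two-stage plan is the same as the paper's: the paper first invokes Lemma~\ref{Lemma:asymptotic} to average out the symbol and noise randomness, and then, in Appendix~\ref{Appendix:asymptoicLLN}, adds and subtracts the means $\mathbb{E}\{|\alpha_{lk}^{lk'}|^2\}$ and $\mathbb{E}\{|\alpha_{lk}^{l'k'}|^2\}$ and applies the weak law of large numbers (Lemma~\ref{lemma:LLN}) directly to the centered terms $B_{lk'}$ and $C_{l'k'}$, treated as mean-zero variables from a common distribution with bounded variance. Your first stage (conditional Chebyshev given the channel realizations, with conditional variance of order $1/(\tau_c-\tau_p)$) is sound and in fact treats the joint limit $K/\tau_c \to 0$ more carefully than the paper's sequential use of Lemma~\ref{Lemma:asymptotic}.

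The genuine gap is in your second stage. After conditioning on $(\mathbf{g}_{lk}^{l'},\mathbf{Y}_{l'})$, Chebyshev can only concentrate $\frac{1}{K}\sum_{k'}\eta_{l'k'}|\alpha_{lk}^{l'k'}|^2$ around the average of the \emph{conditional} means $\mathbb{E}\{|\alpha_{lk}^{l'k'}|^2 \mid \mathbf{g}_{lk}^{l'},\mathbf{Y}_{l'}\}$, and ``an outer expectation'' does not convert this into the lemma's statement, whose centering is at $\mathbb{E}\{|\alpha_{lk}^{l'k'}|^2\}$ (an expectation over the small-scale fading, including the observer's own channel $\mathbf{g}_{lk}^{l'}$). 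For MR, the discrepancy between the two centers is essentially $\rho_{\rm dl}\bigl[(\mathbf{g}_{lk}^{l'})^{\rm H}\mathbf{B}\,\mathbf{g}_{lk}^{l'}-\mathrm{tr}\bigl(\mathbf{B}\mathbf{R}_{lk}^{l'}\bigr)\bigr]$ with $\mathbf{B}$ an average of precoder covariance matrices; since the same $\mathbf{g}_{lk}^{l'}$ is shared by all $K$ interference terms, this fluctuation is not averaged out by letting $K\to\infty$ with $M$ fixed, so an additional argument is needed to bridge the conditional and unconditional centers. Two further caveats with the conditioning itself: conditioning on $\mathbf{Y}_{l'}$ tilts the posterior of the users' positions and correlation matrices away from the i.i.d.\ prior, so ``i.i.d.\ draw of user $k'$'s statistics'' requires justification (conditional independence across pilot indices does survive by pilot orthogonality for MR), and for ZF the precoders are columns of a joint pseudo-inverse, so even conditional independence across $k'$ fails after your conditioning. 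The paper sidesteps all of this by applying the i.i.d.\ LLN directly to the centered terms $B_{lk'}$, $C_{l'k'}$ (itself a loose step, since those terms share $\mathbf{g}_{lk}^{l}$ and coupled precoders); your conditioning is well motivated by the dependence you correctly identified, but as written it replaces that issue with an unresolved mismatch of centering terms, so the proof is not complete.
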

	\begin{proof}  
		This result follows by reformulating the expression and then applying the law-of-large-numbers. The proof is given in Appendix~\ref{Appendix:asymptoicLLN}.
	\end{proof}	
	Lemma \ref{lemma:asymptoicLLN} indicates that we can replace the mutual interference terms by their mean values as $K \rightarrow \infty$. Note that the mean value is computed with respect to the channel realizations for given user locations. 
	{\color{black} The lemma is a rigorous asymptotic result that we will utilize it as a motivation for approximating $\xi_{lk}$ for a finite number of users $K$ per cell as 
		\begin{equation} \label{eq:Xilkv1Orig}
		\xi_{lk}  \approx \left(\eta_{lk}\left|\alpha_{lk}^{lk}\right|^2 + \sum\limits_{\substack{k'=1,\\ k' \neq k}}^K  \eta_{lk'} \mathbb{E} \left\{ \left|\alpha_{lk}^{lk'}\right|^2 \right\}
		+  \sum\limits_{\substack{l'=1,\\ l' \neq l}}^L \sum\limits_{k'=1}^K  \eta_{l'k'} \mathbb{E} \left\{ \left|\alpha_{lk}^{l'k'}\right|^2 \right\} \\
		+ \sigma_{\mathrm{DL}}^2\right).
		\end{equation}
		The approximation in \eqref{eq:Xilkv1Orig} is expected to perform well for practical systems with a limited number of users.}\footnote{{\color{black} There is no theoretical guarantee on the estimation performance if the asymptotic conditions in Lemma~\ref{lemma:asymptoicLLN} are not satisfied, but good approximation for a finite number of users in the system is expected as a rule of thumb.}}

	If there would have been equality in this expression, we can solve for $|\alpha_{lk}^{lk}|$ and obtain the following estimator which is the same estimator as provided in \cite{ngo2017no} for the single-cell case. 
	\begin{equation}
	\alpha_{lk}^{lk} \approx |\alpha_{lk}^{lk}| \approx \sqrt{\frac{ \xi_{lk}  - T_{lk}  }{\eta_{lk}}}
	\end{equation}
	by utilizing that $\alpha_{lk}^{lk}$ is approximately positive and by defining the following variable:
	\begin{equation}\label{eq:TLK}
	T_{lk} = \sum_{k'=1, k' \neq k}^K  \eta_{lk'}  \mathbb{E} \left\{ \left|\alpha_{lk}^{lk'}\right|^2 \right\} + \sum_{l'=1, l' \neq l}^L \sum_{k'=1}^K \eta_{l'k'} \mathbb{E} \left\{ \left|\alpha_{lk}^{l'k'}\right|^2 \right\} + \sigma_{\mathrm{DL}}^2.
	\end{equation}
		Based on \eqref{eq:Xilkv1Orig}, we propose the following estimator of the effective channel gain:
	\begin{equation} \label{eq:alphalkhat}
	\hat{\alpha}_{lk}^{lk} = \begin{cases}
	\sqrt{\frac{ \xi_{lk}  - T_{lk}  }{\eta_{lk}}}, & \mbox{if } \xi_{lk} > \Theta_{lk} ,  \\
	\mathbb{E} \big\{ \alpha_{lk}^{lk} \big\},& \mbox{otherwise}.
	\end{cases}
	\end{equation}
	The second case corresponds to utilizing the mean value as the estimate of the effective channel gain, as previously proposed in \cite{Marzetta2006a}. The switching point $\Theta_{lk}$ between the two cases in \eqref{eq:alphalkhat} should have a value above $T_{lk}$, so that $ \xi_{lk}  - T_{lk} > 0$ and thus the square-root provides a real number. We will show later that somewhat larger switching points are desirable to maximize the performance of the proposed estimator.
	{\color{black} We stress that the proposed estimator in \eqref{eq:alphalkhat} can be applied along with any precoding technique, but then the expectations with respect to the small-scale fading will have to be computed by numerical integration or Monte-Carlo methods. The latter can be utilized in practice by taking a sample average based on measurements obtained in different coherence intervals in time and frequency.}
	 Next, we will provide examples of when the expectations can be computed in closed form. To compare the performance of our proposed approach with other available schemes, we consider the normalized MSE at the user $k$ in cell $l$ defined as
	\begin{equation} \label{eq:NMSE}
	\text{MSE}_{lk} = \frac{\mathbb{E}\{|\hat{\alpha}_{lk}^{lk}-{\alpha}_{lk}^{lk}|^2\}}{\mathbb{E}\{|{\alpha}_{lk}^{lk}|^2\}}.
	\end{equation}

	\subsection{Downlink Channel Estimation With Linear Precoding Techniques}
	
	{\color{black} We will now consider a few different linear precoding techniques, motivated by the fact that non-linear precoding has negligible spectral efficiency benefits in Massive MIMO and the expectations in  \eqref{eq:alphalkhat} can be computed in closed form for linear precoding.} We begin with MR precoding, in which case	
	\begin{equation} \label{eq:PrecodVecCorr}
	\mathbf{w}_{lk} = \frac{\hat{\mathbf{g}}^{l}_{lk}}{\sqrt{\mathbb{E}\left\{\|\hat{\mathbf{g}}^{l}_{lk} \|^{2}\right\}}}
	\end{equation}
	where the normalization term makes sure that $\mathbb{E} \{ \| \mathbf{w}_{lk} \|^2 \} = 1$ \cite{redbook}.\footnote{{\color{black}We notice that MR can also be defined as $\hat{\mathbf{g}}^{l}_{lk} / \| \hat{\mathbf{g}}^{l}_{lk}\|$, where the normalization is based on the instantaneous realization of the norm, instead of the square-root of the expected norm-square. We have chosen the normalization in \eqref{eq:PrecodVecCorr} since it is the standard assumption in the Massive MIMO literature and leads to tractable analysis when dealing with spatially correlated fading channels. However, there is indications that normalization based on the instantaneous CSI can give better SEs \cite{ngo2017no}.}} 
	This precoding scheme is of the form in \eqref{eq:precoding-structure} with $\mathbf{A}_{lk} = \frac{1}{\sqrt{\mathbb{E}\{\|\hat{\mathbf{g}}^{l}_{lk} \|^{2}\}}} \mathbf{I}_M$.

	\begin{lemma} \label{Lemma:ClosedFormMRCorrelated}
		If each BS uses MR precoding, the proposed estimator can be computed as in \eqref{eq:alphalkhat} with
		\begin{equation} \label{eq:Tlk-derivation}
		\begin{aligned}
		&T_{lk} =   \sum\limits_{\substack{k'=1, \\k' \neq k}}^K \rho_{\rm dl}\eta_{lk'} \frac{\mathrm{Tr}\left(\mathbf{R}^{l}_{lk'}\boldsymbol{\Psi}^{-1}_{lk'}\mathbf{R}^{l}_{lk'}\mathbf{R}^{l}_{lk}\right)}{\mathrm{Tr}\left(\mathbf{R}^{l}_{lk'}\boldsymbol{\Psi}^{-1}_{lk'}\mathbf{R}^{l}_{lk'}\right)}+ \sum\limits_{\substack{l'=1,\\ l' \neq l}}^L \sum_{k'=1}^K \rho_{\rm dl}\eta_{l'k'} \frac{\mathrm{Tr}\left(\mathbf{R}^{l'}_{l'k'}\boldsymbol{\Psi}^{-1}_{l'k'}\mathbf{R}^{l'}_{l'k'}\mathbf{R}^{l'}_{lk}\right)}{\mathrm{Tr}\left(\mathbf{R}^{l'}_{l'k'}\boldsymbol{\Psi}^{-1}_{l'k'}\mathbf{R}^{l'}_{l'k'}\right)} \\
		&+ \sum\limits_{l' \in \mathcal{P}_{l} \setminus \{l\}} \rho_{\rm dl}\eta_{l'k} \left(\frac{\hat{p}_{lk}\tau_{p} \left| \mathrm{Tr}\left(\mathbf{R}^{l'}_{lk}\boldsymbol{\Psi}^{-1}_{lk}\mathbf{R}^{l'}_{l'k} \right)\right|^{2}  }{ \mathrm{Tr}\left(\mathbf{R}^{l'}_{l'k}\boldsymbol{\Psi}^{-1}_{lk}\mathbf{R}^{l'}_{l'k}\right)} \right)  + \sigma_{\mathrm{DL}}^2.
		\end{aligned}
		\end{equation}
	\end{lemma}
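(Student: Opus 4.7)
My plan is to evaluate each of the expectations $\mathbb{E}\{|\alpha_{lk}^{l'k'}|^2\}$ appearing in the definition \eqref{eq:TLK} of $T_{lk}$ under the MR precoder \eqref{eq:PrecodVecCorr}. Substituting \eqref{eq:PrecodVecCorr} into $\alpha_{lk}^{l'k'} = \sqrt{\rho_{\rm dl}}(\mathbf{g}_{lk}^{l'})^{\rm H}\mathbf{w}_{l'k'}$ gives
\begin{equation*}
\mathbb{E}\{|\alpha_{lk}^{l'k'}|^2\} = \frac{\rho_{\rm dl}\,\mathbb{E}\{|(\mathbf{g}_{lk}^{l'})^{\rm H}\hat{\mathbf{g}}_{l'k'}^{l'}|^2\}}{\mathbb{E}\{\|\hat{\mathbf{g}}_{l'k'}^{l'}\|^2\}},
\end{equation*}
and I would first use Lemma~\ref{lemma:ChannelEstCorrelated} to evaluate the denominator in closed form as $\tau_p\hat{p}_{l'k'}\,\mathrm{Tr}(\mathbf{R}_{l'k'}^{l'}\boldsymbol{\Psi}_{l'k'}^{-1}\mathbf{R}_{l'k'}^{l'})$. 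Everything then reduces to computing the numerator $\mathbb{E}\{|(\mathbf{g}_{lk}^{l'})^{\rm H}\hat{\mathbf{g}}_{l'k'}^{l'}|^2\}$.

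Next I would partition the indices $(l',k')$ into a pilot-sharing case and a non-pilot-sharing case, since the MMSE estimate $\hat{\mathbf{g}}_{l'k'}^{l'} = \sqrt{\hat{p}_{l'k'}}\mathbf{R}_{l'k'}^{l'}\boldsymbol{\Psi}_{l'k'}^{-1}\tilde{\mathbf{y}}_{l'k'}$ depends on $\mathbf{g}_{lk}^{l'}$ only through the presence of user $lk$ in the pilot sum defining $\tilde{\mathbf{y}}_{l'k'}$. In the non-pilot-sharing case, i.e.\ either $k'\neq k$ (which covers all intra-cell $k'\neq k$ terms) or $l'\notin\mathcal{P}_l$, the vectors $\mathbf{g}_{lk}^{l'}$ and $\hat{\mathbf{g}}_{l'k'}^{l'}$ are zero-mean and statistically independent; for two such jointly Gaussian independent vectors one obtains $\mathbb{E}\{|\mathbf{Y}^{\rm H}\mathbf{X}|^2\} = \mathrm{Tr}(\mathrm{Cov}(\mathbf{X})\mathrm{Cov}(\mathbf{Y}))$, which after substitution and cancellation of $\tau_p\hat{p}_{l'k'}$ yields the first two summations in the claim.

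For the pilot-contamination case ($k'=k$ and $l'\in\mathcal{P}_l\setminus\{l\}$), $\mathbf{g}_{lk}^{l'}$ and $\hat{\mathbf{g}}_{l'k}^{l'}$ are jointly circular complex Gaussian with a nonzero cross-covariance, obtained by isolating the $j=l$ term in the pilot sum: $\mathbb{E}\{\hat{\mathbf{g}}_{l'k}^{l'}(\mathbf{g}_{lk}^{l'})^{\rm H}\}= \tau_p\sqrt{\hat{p}_{l'k}\hat{p}_{lk}}\,\mathbf{R}_{l'k}^{l'}\boldsymbol{\Psi}_{l'k}^{-1}\mathbf{R}_{lk}^{l'}$. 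I would then invoke the Gaussian fourth-order identity (Isserlis' theorem for circular complex Gaussians), which gives
\begin{equation*}
\mathbb{E}\{|\mathbf{Y}^{\rm H}\mathbf{X}|^2\} = |\mathrm{Tr}(\mathbb{E}\{\mathbf{X}\mathbf{Y}^{\rm H}\})|^2 + \mathrm{Tr}(\mathrm{Cov}(\mathbf{X})\mathrm{Cov}(\mathbf{Y})),
\end{equation*}
so the variance part contributes the same formula as in the independent case (already absorbed by the second sum, which in the claim ranges over all $l'\neq l$) and the extra mean-squared part contributes the third sum involving $|\mathrm{Tr}(\mathbf{R}_{lk}^{l'}\boldsymbol{\Psi}^{-1}\mathbf{R}_{l'k}^{l'})|^2$. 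Finally, the noise term $\sigma_{\mathrm{DL}}^2$ carries over from \eqref{eq:TLK} unchanged.

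The main obstacle will be bookkeeping in the pilot-contamination step: correctly identifying which subscripts index the $\boldsymbol{\Psi}$ matrices (since the estimate uses $\boldsymbol{\Psi}_{l'k}$ at BS $l'$ while the user of interest lives in cell $l$), verifying the Isserlis-based decomposition into a ``signal-squared-mean'' plus ``independent-covariance'' piece so that the variance part matches the generic second summation in the claim, and confirming that the cross-covariance vanishes for every other index configuration so no hidden pilot-sharing contributions are missed. The remaining algebra is straightforward trace manipulations.
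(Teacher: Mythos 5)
Your proposal is correct and is essentially the paper's own route: the paper omits the derivation and simply points to the standard interference-term computation in \cite[Corollary~4.7, eq.~(C.74)]{bjornson2017massive}, and what you write out is exactly that computation --- plug the MR precoder \eqref{eq:PrecodVecCorr} into \eqref{eq:TLK}, use Lemma~\ref{lemma:ChannelEstCorrelated} for the normalization $\mathbb{E}\{\|\hat{\mathbf{g}}_{l'k'}^{l'}\|^2\}=\tau_p\hat{p}_{l'k'}\mathrm{Tr}(\mathbf{R}_{l'k'}^{l'}\boldsymbol{\Psi}_{l'k'}^{-1}\mathbf{R}_{l'k'}^{l'})$, apply the independent-Gaussian trace identity to the non-pilot-sharing terms, and the joint circular-Gaussian fourth-moment identity with the cross-covariance $\tau_p\sqrt{\hat{p}_{l'k}\hat{p}_{lk}}\,\mathbf{R}_{l'k}^{l'}\boldsymbol{\Psi}_{l'k}^{-1}\mathbf{R}_{lk}^{l'}$ to the pilot-sharing terms, whose covariance part is indeed already absorbed by the second sum. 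On the bookkeeping issue you flag: your derivation correctly yields $\boldsymbol{\Psi}_{l'k}$ (the matrix at BS~$l'$, which defines the interfering precoder's estimate) in both the numerator and denominator of the coherent contamination term; the $\boldsymbol{\Psi}_{lk}$ written there in \eqref{eq:Tlk-derivation} should be read as a subscript slip, since by Lemma~\ref{lemma:ChannelEstCorrelated} that matrix is built from correlation matrices seen at BS~$l$, so your version is the consistent one.
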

	\begin{proof}
	 {\color{black}The terms in $T_{lk}$ correspond to the inter-cell and intra-cell interference terms that commonly appear in SINR expressions. In particular, the derivation of \eqref{eq:Tlk-derivation} is similar to computation of equation $(\rm{C}.74)$ in \cite[Corollary~4.7]{bjornson2017massive} and omitted here.}
	\end{proof}
	
	The closed-form expression of $T_{lk}$ is only a function of the large-scale fading coefficients, which are constant, thus the user can compute its value once and utilize it for many coherence intervals.
	In the special case of i.i.d.~Rayleigh fading, the expression can be simplified as follows.
		\begin{corollary} \label{Lemma:ClosedFormMR}
		If each BS uses MR precoding and all channels are subject to i.i.d.~Rayleigh fading, the proposed estimator can be computed as in \eqref{eq:alphalkhat} with		
		\begin{equation}
		T_{lk} = \beta_{lk}^l \sum\limits_{k'=1,\\ k' \neq k}^K \rho_{\rm dl}\eta_{lk'} +  \sum\limits_{l'=1, l' \neq l}^L \sum\limits_{k'=1}^K \rho_{\rm dl}\eta_{l'k'} \beta_{lk}^{l'} + M\sum\limits_{l' \in \mathcal{P}_l \setminus \{l\}} \rho_{\rm dl}\eta_{l'k}  \gamma_{lk}^{l'} + \sigma_{\mathrm{DL}}^2.
		\end{equation}
	\end{corollary}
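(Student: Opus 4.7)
The plan is to specialise the general closed-form expression for $T_{lk}$ in Lemma~\ref{Lemma:ClosedFormMRCorrelated} by direct substitution of $\mathbf{R}^{l}_{l'k'} = \beta^{l}_{l'k'}\mathbf{I}_M$ and then simplifying the traces. Under this substitution every correlation matrix is a scaled identity, so $\boldsymbol{\Psi}_{lk}$ collapses to $\psi_{lk}\mathbf{I}_M$ with the scalar $\psi_{lk} = \tau_p\sum_{l''\in\mathcal{P}_l}\hat{p}_{l''k}\beta^{l}_{l''k}+\sigma^2_{\mathrm{UL}}$, and every matrix product inside a trace becomes a scalar multiple of $\mathbf{I}_M$ whose trace pulls out one factor of $M$.

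For the first (intra-cell) sum in \eqref{eq:Tlk-derivation}, the numerator and denominator traces evaluate respectively to $M(\beta^{l}_{lk'})^{2}\beta^{l}_{lk}/\psi_{lk'}$ and $M(\beta^{l}_{lk'})^{2}/\psi_{lk'}$; their ratio cleanly collapses to $\beta^{l}_{lk}$, which is independent of $k'$ and can therefore be pulled outside the summation, matching the first term of the corollary. The same mechanism applied to the non-pilot-sharing inter-cell sum produces the factor $\beta^{l'}_{lk}$, giving the second term. For the pilot-sharing sum I compute $\mathrm{Tr}(\mathbf{R}^{l'}_{lk}\boldsymbol{\Psi}^{-1}_{lk}\mathbf{R}^{l'}_{l'k}) = M\beta^{l'}_{lk}\beta^{l'}_{l'k}/\psi_{lk}$ and $\mathrm{Tr}(\mathbf{R}^{l'}_{l'k}\boldsymbol{\Psi}^{-1}_{lk}\mathbf{R}^{l'}_{l'k}) = M(\beta^{l'}_{l'k})^{2}/\psi_{lk}$, square the former, divide by the latter, and multiply by the prefactor $\hat{p}_{lk}\tau_p$ to obtain $M\tau_p\hat{p}_{lk}(\beta^{l'}_{lk})^{2}/\psi_{lk}$. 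I identify this scalar as $M\gamma^{l'}_{lk}$ via the natural extension of \eqref{eq:gammalkl} to the cross-cell estimate $\hat{\mathbf{g}}^{l'}_{lk}$ that arises through the proportionality \eqref{eq:ChannelEstRelation}.

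Collecting the three simplified contributions together with the unchanged noise term $\sigma^{2}_{\mathrm{DL}}$ reproduces the expression of the corollary. There is no real conceptual obstacle: the derivation is mechanical trace algebra on scaled identity matrices, and every quotient of traces simplifies by explicit cancellation of $\psi$. The only point worth being explicit about in the write-up is flagging the meaning of $\gamma^{l'}_{lk}$ for $l'\in\mathcal{P}_l\setminus\{l\}$, so that the reader can immediately read off the pilot-contamination term in the compact $M\gamma^{l'}_{lk}$ form displayed in the statement.
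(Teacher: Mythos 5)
Your proposal is correct and follows essentially the same route the paper intends: the corollary is a direct specialization of Lemma~\ref{Lemma:ClosedFormMRCorrelated} with $\mathbf{R}^{l}_{l'k'}=\beta^{l}_{l'k'}\mathbf{I}_M$, after which each trace ratio collapses exactly as you compute. One small point to state precisely in the write-up: in the pilot-contamination term of \eqref{eq:Tlk-derivation} the matrix $\boldsymbol{\Psi}$ must be understood as the one formed at BS~$l'$ (built from the $\mathbf{R}^{l'}_{l''k}$, $l''\in\mathcal{P}_l$), not your scalar $\psi_{lk}$ defined with the $\beta^{l}_{l''k}$; only then does $M\tau_p\hat{p}_{lk}(\beta^{l'}_{lk})^2/\psi$ equal $M\gamma^{l'}_{lk}$ under the natural extension of \eqref{eq:gammalkl}, a notational subtlety the paper itself glosses over.
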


{\color{black}In this expression, the first and second terms correspond to non-coherent interference, which is similar to what has been obtained in \cite{ngo2017no}, and the third term is coherent interference coming from pilot contamination, which is introduced by orthogonal pilot signals that may be reused across cells. This expression demonstrates the impact of non-coherent interference, channel estimation quality, and pilot contamination.}
Another important precoding scheme is ZF, in which case
	\begin{equation} \label{eq:PrecodVec}
	\mathbf{w}_{lk} = 
	\frac{\mathbf{z}_{lk}} {\sqrt{\mathbb{E}\left\{\left\| \mathbf{z}_{lk} \right\|^2\right\}}} 
	\end{equation}	
	where $\mathbf{z}_{lk}$ is the $k$-th column of the pseudo-inverse matrix $\widehat{\mathbf{G}}_l \big(  \widehat{\mathbf{G}}_l^H   \widehat{\mathbf{G}}_l \big)^{-1} $ with $\widehat{\mathbf{G}}_l = [ \hat{\mathbf{g}}_{l1}^l, \ldots, \hat{\mathbf{g}}_{lK}^l] \in \mathbb{C}^{M \times K}$. 
	The expected values in \eqref{eq:TLK} cannot be computed in closed form for ZF under correlated fading, but we obtain the following result in the special case of i.i.d.~fading.	
	
	\begin{lemma} \label{Lemma:ClosedFormZF}
			If each BS uses ZF precoding and all channels are subject to i.i.d.~Rayleigh fading, the proposed estimator can be computed as in \eqref{eq:alphalkhat} with	
		\begin{equation}\label{eq:ClosedFormZF}
		\begin{aligned}
		T_{lk} &= \left(\beta_{lk}^{l'} - \gamma_{lk}^{l'} \right)\sum\limits_{\substack{k'=1,\\ k' \neq k}}^K \rho_{\rm dl}\eta_{lk'} + \sum\limits_{\substack{l' \in \mathcal{P}_{l} \setminus \{l\}}} \sum_{\substack{k'=1}}^K \rho_{\rm dl}\eta_{l'k'} \left(\beta_{lk}^{l'} - \gamma_{lk}^{l'}\right)+\sum\limits_{\substack{l' \notin \mathcal{P}_{l}}} \sum_{k'=1}^K \rho_{\rm dl}\eta_{l'k'} \beta_{lk}^{l'} \\
		& + (M-K)\sum\limits_{l' \in \mathcal{P}_{l} \setminus \{l\}} \rho_{\rm dl}\eta_{l'k} \gamma_{lk}^{l'} + \sigma_{\mathrm{DL}}^2
		\end{aligned}
		\end{equation}
		
	\end{lemma}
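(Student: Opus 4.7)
The plan is to expand the definition of $T_{lk}$ from \eqref{eq:TLK} term-by-term and evaluate each $\mathbb{E}\{|\alpha_{lk}^{l'k'}|^2\}$ in closed form by exploiting two structural facts: the MMSE orthogonality decomposition $\mathbf{g}_{lk}^{l'} = \hat{\mathbf{g}}_{lk}^{l'} + \mathbf{e}_{lk}^{l'}$ from Corollary~\ref{lemma:ChannelEst}, in which $\mathbf{e}_{lk}^{l'}$ is independent of every precoder $\mathbf{w}_{l'k'}$ with i.i.d.\ entries of variance $\beta_{lk}^{l'}-\gamma_{lk}^{l'}$, and the defining zero-forcing property $(\hat{\mathbf{g}}_{l'j}^{l'})^{\rm H}\mathbf{z}_{l'k'}=\delta_{jk'}$, which kills many inner products. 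Because $\mathbb{E}\{\|\mathbf{w}_{l'k'}\|^2\}=1$ by the normalization in \eqref{eq:PrecodVec}, the decomposition immediately produces an additive error contribution $\rho_{\rm dl}(\beta_{lk}^{l'}-\gamma_{lk}^{l'})$ in every non-zero summand, and reduces the computation to determining when the estimate-part inner product $(\hat{\mathbf{g}}_{lk}^{l'})^{\rm H}\mathbf{w}_{l'k'}$ vanishes.

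Next I would split the outer sum into three disjoint cases. For $l'\notin\mathcal{P}_l$, user $k$ of cell $l$ shares no pilot with any user served by BS $l'$, so $\mathbf{g}_{lk}^{l'}$ is statistically independent of $\mathbf{w}_{l'k'}$ and $\mathbb{E}\{|\alpha_{lk}^{l'k'}|^2\}=\rho_{\rm dl}\beta_{lk}^{l'}$, yielding term $(C)$. For $l'=l$ with $k'\neq k$, the ZF constraint directly gives $(\hat{\mathbf{g}}_{lk}^{l})^{\rm H}\mathbf{w}_{lk'}=0$, leaving only the error-variance contribution that matches term $(A)$. For $l'\in\mathcal{P}_l\setminus\{l\}$ with $k'\neq k$, I would invoke the pilot-contamination identity \eqref{eq:ChannelEstRelation} to rewrite $\hat{\mathbf{g}}_{lk}^{l'}$ as a deterministic scalar multiple of $\hat{\mathbf{g}}_{l'k}^{l'}$, whose inner product with $\mathbf{w}_{l'k'}$ is annihilated by the same ZF orthogonality; summing the error contributions over all $k'$ (including the $k'=k$ piece accounted for separately below) aggregates into term $(B)$.

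The main obstacle, as anticipated, is the coherent pilot-contamination case $l'\in\mathcal{P}_l\setminus\{l\}$ with $k'=k$. Here ZF gives $(\hat{\mathbf{g}}_{l'k}^{l'})^{\rm H}\mathbf{w}_{l'k}=1/\sqrt{\mathbb{E}\{\|\mathbf{z}_{l'k}\|^2\}}$, and \eqref{eq:ChannelEstRelation} yields
\begin{equation*}
\bigl|(\hat{\mathbf{g}}_{lk}^{l'})^{\rm H}\mathbf{w}_{l'k}\bigr|^2 \;=\; \frac{\hat{p}_{lk}(\beta_{lk}^{l'})^{2}}{\hat{p}_{l'k}(\beta_{l'k}^{l'})^{2}}\cdot\frac{1}{\mathbb{E}\{\|\mathbf{z}_{l'k}\|^2\}}.
\end{equation*}
The critical ingredient is the complex-Wishart identity $\mathbb{E}\{\|\mathbf{z}_{l'k}\|^2\}=1/((M-K)\gamma_{l'k}^{l'})$, obtained from $\mathbb{E}\{[\widehat{\mathbf{G}}_{l'}^{\rm H}\widehat{\mathbf{G}}_{l'}]^{-1}\}=\mathbf{I}_K/((M-K)\gamma_{l'k}^{l'})$ for i.i.d.\ columns with variance $\gamma_{l'k}^{l'}$. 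Substituting and applying the algebraic simplification $\hat{p}_{lk}(\beta_{lk}^{l'})^{2}\gamma_{l'k}^{l'}/(\hat{p}_{l'k}(\beta_{l'k}^{l'})^{2})=\gamma_{lk}^{l'}$, which follows directly from the definition of $\gamma$ in \eqref{eq:gammalkl}, reduces the coherent contribution to $\rho_{\rm dl}(M-K)\gamma_{lk}^{l'}$ and supplies term $(D)$. Adding the residual error-variance piece $\rho_{\rm dl}(\beta_{lk}^{l'}-\gamma_{lk}^{l'})$ (already absorbed into $(B)$) and the noise $\sigma_{\mathrm{DL}}^{2}$ then assembles the four sums in \eqref{eq:ClosedFormZF}.
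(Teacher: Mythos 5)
Your proposal is correct and follows exactly the route the paper itself invokes: the paper omits the proof and refers to the standard SINR-term derivations (MMSE error decomposition, ZF orthogonality $(\hat{\mathbf{g}}_{l'j}^{l'})^{\rm H}\mathbf{z}_{l'k'}=\delta_{jk'}$, the pilot-contamination scaling \eqref{eq:ChannelEstRelation}, and the Wishart inverse-moment giving $\mathbb{E}\{\|\mathbf{z}_{l'k}\|^2\}=1/((M-K)\gamma_{l'k}^{l'})$), which is precisely what you carry out, correctly separating non-coherent and coherent ($k'=k$, $l'\in\mathcal{P}_l\setminus\{l\}$) contributions. Your case analysis in fact makes explicit that the first term should read $\beta_{lk}^{l}-\gamma_{lk}^{l}$ (the superscript $l'$ in \eqref{eq:ClosedFormZF} is a typo), and your only cosmetic slip — writing $\mathbb{E}\{(\widehat{\mathbf{G}}_{l'}^{\rm H}\widehat{\mathbf{G}}_{l'})^{-1}\}$ as a scaled identity although the columns have distinct variances — does not affect the needed diagonal entry.
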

	\begin{proof}
		The proof follows standard derivations normally used when computing and analyzing the SINR expressions e.g. in \cite[Ch.~4]{redbook} and omitted here.		
	\end{proof}

	\section{Ergodic SE \& Asymptotic Analysis}\label{sec:SE}
	To evaluate the communication performance when using the proposed estimators of the effective downlink channel gain, we need to obtain an ergodic spectral efficiency expression that supports arbitrary estimators. The expression developed in this section will be utilized for numerical comparisons in Section~\ref{sec:numerical-results}. In this section, we will also analyze the proposed estimator's asymptotic properties as the length of the coherence intervals grows large. 
	
	\subsection{Ergodic SE}
	We begin by deriving an SE expression that can be utilized in conjunction with the proposed estimator in \eqref{eq:alphalkhat}. Since the estimate in \eqref{eq:alphalkhat} depends on the received signals, it is correlated with the data signals, which makes the derivation of the SE complicated. To resolve this issue, we follow a similar methodology as in \cite{ngo2017no}. In particular, when studying the $n$-th data symbol, we remove $y_{lk}[n]$ from the received data such that the sample average power of the signal at user~$k$ in cell~$l$ is reformulated as
	\begin{equation} \label{eq:Estv2}
	\xi^{'}_{lk}[n]  = \frac{\sum_{n'=1,n' \neq n}^{\tau_c - \tau_p} \left|y_{lk} [n']\right|^2 }{\tau_c - \tau_p-1}.
	\end{equation}
	If we utilize \eqref{eq:Estv2} to estimate the effective downlink channel gain of user~$k$ in cell~$l$, i.e., denoted as $\bar{\alpha}_{lk}^{lk}[n]$, then it is clear that  $\bar{\alpha}_{lk}^{lk}[n]$ is close to $\hat{\alpha}_{lk}^{lk}[n]$ when $\tau_c - \tau_p$ grows large. Next, we divide \eqref{eq:ReceivedSigv1} by $\sqrt{\eta_{lk}}\bar{\alpha}_{lk}^{lk}[n]$ to perform equalization of the effective channel gains; that is, making the factor in front of $s_{lk}[n]$ approximately equal to one. 
This results in the equalized received signal
	\begin{equation} \label{eq:NewReceiSig}
	\begin{split}
	y'_{lk}[n] = \frac{\alpha_{lk}^{lk}}{\bar{\alpha}_{lk}^{lk}[n]} s_{lk}[n] + \sum\limits_{\substack{k'=1,\\ k' \neq k}}^K  \sqrt{\frac{\eta_{lk'}}{\eta_{lk}}}\frac{\alpha_{lk}^{lk'}}{\bar{\alpha}_{lk}^{lk}[n]} s_{lk'}[n]
	+ \sum\limits_{\substack{l'=1,\\ l' \neq l}}^L \sum_{k'=1}^K \sqrt{\frac{\eta_{l'k'}}{\eta_{lk}}}\frac{\alpha_{lk}^{l'k'}}{\bar{\alpha}_{lk}^{lk}[n]} s_{l'k'}[n] + \frac{\tilde{w}_{lk}[n]}{\sqrt{\eta_{lk}}\bar{\alpha}_{lk}^{lk}[n]}.
	\end{split}
	\end{equation}
	The first term in \eqref{eq:NewReceiSig} contains the desired signal, while the remaining terms include mutual interference and noise. We assume that the users are capable to calculate the average of the equalized effective channel gains $\mathbb{E}\left\{\frac{\alpha_{lk}^{lk}}{\bar{\alpha}_{lk}^{lk}[n]}\right\}$, which is constant over long time and also the same for all $n$ indices, since we have i.i.d.~data symbols. Note that, the users should calculate various long-term statistics to use in the derivation herein. Hence, we can recast $y'_{lk}$ to get an equivalent expression as 
	\begin{align} \notag
	y'_{lk}[n] &= \mathbb{E}\left\{\frac{\alpha_{lk}^{lk}}{\bar{\alpha}_{lk}^{lk}[n]}\right\} s_{lk}[n] + \left(\frac{\alpha_{lk}^{lk}}{\bar{\alpha}_{lk}^{lk}[n]} - \mathbb{E}\left\{\frac{\alpha_{lk}^{lk}}{\bar{\alpha}_{lk}^{lk}[n]}\right\}\right) s_{lk}[n] + \sum\limits_{\substack{k'=1, k' \neq k}}^K  \sqrt{\frac{\eta_{lk'}}{\eta_{lk}}}\frac{\alpha_{lk}^{lk'}}{\bar{\alpha}_{lk}^{lk}[n]} s_{lk'}[n] 
	\\
	&+ \sum\limits_{\substack{l'=1, l' \neq l}}^L \sum_{k'=1}^K \sqrt{\frac{\eta_{l'k'}}{\eta_{lk}}}\frac{\alpha_{lk}^{l'k'}}{\bar{\alpha}_{lk}^{lk}[n]} s_{l'k'}[n] + \frac{\tilde{w}_{lk}[n]}{\sqrt{\eta_{lk}}\bar{\alpha}_{lk}^{lk}[n]}.
\label{eq:normalizedReceived}
	\end{align}

	In \eqref{eq:normalizedReceived}, the first term comprises the desired signal multiplied with a deterministic channel gain $\mathbb{E}\left\{\frac{\alpha_{lk}^{lk}}{\bar{\alpha}_{lk}^{lk}[n]}\right\}$. If the equalization is successful, the second term in \eqref{eq:normalizedReceived} should be small. By treating the last four terms as additive noise and applying the channel capacity bounding technique developed in \cite{jose2011pilot}, we obtain the following result.
	
\begin{lemma}\label{lemma:SE}
	A downlink ergodic spectral efficiency for user $k$ in cell $l$ is	
			\begin{equation} \label{eq:SpectralEfficiencyH}
			{\rm{R}}_{lk} = \left( 1 -\frac{\tau_p}{\tau_c} \right) \log_2 \left( 1 + \mathrm{SINR}_{lk} \right), \mbox{ [b/s/Hz]},
			\end{equation}
			where the effective downlink signal to interference and noise ratio (SINR) is given as
		\begin{equation}\label{eq:SINR}
		\mathrm{SINR}_{lk}= \frac{\left|\mathbb{E}\left\{\frac{\alpha_{lk}^{lk}}{\bar{\alpha}_{lk}^{lk}[n]}\right\}\right|^2}{\mathrm{var}\left\{\frac{\alpha_{lk}^{lk}}{\bar{\alpha}_{lk}^{lk}[n]}\right\}+ \sum\limits_{\substack{k'=1,\\ k' \neq k}}^K \frac{\eta_{lk'}}{\eta_{lk}} \mathbb{E}\left\{\left|\frac{\alpha_{lk}^{lk'}}{\bar{\alpha}_{lk}^{lk}[n]}\right|^2\right\} + \sum\limits_{\substack{l'=1,\\ l' \neq l}}^L \sum\limits_{k'=1}^K \frac{\eta_{l'k'}}{\eta_{lk}}\mathbb{E}\left\{\left|\frac{\alpha_{lk}^{l'k'}}{\bar{\alpha}_{lk}^{lk}[n]}\right|^2\right\}+ \frac{\sigma_{\mathrm{DL}}^2}{\eta_{lk}}\mathbb{E} \left\{\frac{1}{\left|\bar{\alpha}_{lk}^{lk}[n]\right|^2}\right\} }.
		\end{equation}	
	\end{lemma}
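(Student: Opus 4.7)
The plan is to apply the standard ``use-and-then-forget'' capacity bounding technique from \cite{jose2011pilot} to the equalized received signal in \eqref{eq:normalizedReceived}. The key structural feature of that decomposition is that the first term $\mathbb{E}\{\alpha_{lk}^{lk}/\bar{\alpha}_{lk}^{lk}[n]\} s_{lk}[n]$ is a deterministic gain times the desired symbol, so it plays the role of ``effective signal''; the remaining four terms will be collected into an ``effective noise'' $n_{lk}^{\rm eff}[n]$, and I then argue that $s_{lk}[n]$ and $n_{lk}^{\rm eff}[n]$ are uncorrelated so that a worst-case Gaussian noise bound yields the claimed SINR.

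First I would verify $\mathbb{E}\{s_{lk}[n]^{*}\, n_{lk}^{\rm eff}[n]\}=0$. The self-interference contribution is $\bigl(\alpha_{lk}^{lk}/\bar{\alpha}_{lk}^{lk}[n]-\mathbb{E}\{\alpha_{lk}^{lk}/\bar{\alpha}_{lk}^{lk}[n]\}\bigr)s_{lk}[n]$, whose random coefficient has zero mean by construction; since the coefficient depends only on the channels and on symbols $s_{lk}[n']$ with $n'\neq n$ (this is the purpose of using $\xi_{lk}'[n]$ in \eqref{eq:Estv2} instead of $\xi_{lk}$), it is independent of $s_{lk}[n]$, and the expectation factorizes to zero using $\mathbb{E}\{|s_{lk}[n]|^2\}=1$ together with the zero-mean property of the coefficient. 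The intra-cell and inter-cell interference terms involve symbols $s_{lk'}$, $s_{l'k'}$ with $(l',k')\neq(l,k)$ that are independent of $s_{lk}[n]$, and the additive noise term involves $\tilde{w}_{lk}[n]$ which is independent of everything else. Hence each of these terms contributes zero correlation.

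Next I would compute the variance of $n_{lk}^{\rm eff}[n]$ by summing the contributions of its four uncorrelated components. The self-interference term contributes $\mathrm{var}\{\alpha_{lk}^{lk}/\bar{\alpha}_{lk}^{lk}[n]\}$, since $|s_{lk}[n]|^2$ has unit mean and is independent of the zero-mean coefficient. Each interference term contributes $(\eta_{l'k'}/\eta_{lk})\,\mathbb{E}\{|\alpha_{lk}^{l'k'}/\bar{\alpha}_{lk}^{lk}[n]|^2\}$ by factoring $\mathbb{E}\{|s_{l'k'}[n]|^2\}=1$, and similarly the noise term contributes $(\sigma_{\rm DL}^2/\eta_{lk})\,\mathbb{E}\{1/|\bar{\alpha}_{lk}^{lk}[n]|^2\}$. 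Combining these yields exactly the denominator of \eqref{eq:SINR}. Then the standard fact that for a channel of the form $y'=c\,s+n_{\rm eff}$ with deterministic $c$, unit-variance $s$, and $s$ uncorrelated with $n_{\rm eff}$, an achievable rate is $\log_2(1+|c|^2/\mathbb{E}\{|n_{\rm eff}|^2\})$ bits/symbol gives the inner $\log_2(1+\mathrm{SINR}_{lk})$ factor in \eqref{eq:SpectralEfficiencyH}. The pre-log $(1-\tau_p/\tau_c)$ simply accounts for the $\tau_p$ symbols per coherence block spent on uplink pilots rather than downlink data.

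The main delicate step is the uncorrelatedness argument, and in particular the independence of $\bar{\alpha}_{lk}^{lk}[n]$ from $s_{lk}[n]$: if one had used $\hat{\alpha}_{lk}^{lk}$ built from the original $\xi_{lk}$ in \eqref{eq:Xilko}, the estimator would depend on $y_{lk}[n]$ and therefore on $s_{lk}[n]$, producing nonzero correlations that would break the clean decomposition. The leave-one-out construction \eqref{eq:Estv2} is precisely what removes this dependence, and once it is justified, everything else reduces to routine expansion of squared magnitudes and invocation of the independence and zero-mean properties of the data and noise. The remaining calculations are algebraic bookkeeping of \eqref{eq:ReceivedSigv1} after dividing through by $\sqrt{\eta_{lk}}\bar{\alpha}_{lk}^{lk}[n]$.
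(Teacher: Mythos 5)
Your proposal is correct and follows essentially the same route as the paper: the paper also treats the equalized signal in \eqref{eq:normalizedReceived} as a deterministic gain $\mathbb{E}\{\alpha_{lk}^{lk}/\bar{\alpha}_{lk}^{lk}[n]\}$ times $s_{lk}[n]$ plus uncorrelated effective noise and invokes the capacity bounding technique of \cite{jose2011pilot}, with the leave-one-out construction \eqref{eq:Estv2} serving exactly the decorrelating role you identify. Your write-up merely makes explicit the uncorrelatedness checks and variance bookkeeping that the paper leaves to the cited reference.
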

	This is an achievable SE in the sense that it is a lower bound on the ergodic channel capacity. {\color{black} In Section~\ref{sec:numerical-results}, we will demonstrate the effectiveness of this equalization-based bound compared to the conventional hardening capacity bounding technique from \cite{redbook,bjornson2017massive}. We note that \cite{ngo2017no} considers another channel capacity bound that could be applied in this setup, but it needs to be evaluated numerically using complicated numerical approximations, which makes it intractable when evaluating the SE for cellular Massive MIMO communications under the spatially correlated fading channels.}
	
For benchmark purposes, we will also consider the ideal case when the users have access to perfect CSI. Then, the first term in \eqref{eq:ReceivedSigv1} is the desired signal multiplied with a known channel and the remaining terms can be treated as additive noise. By applying a standard ergodic channel capacity bounding technique from \cite{redbook}, we have the following result:

\begin{lemma} \label{lemma:SEPerfect}
	If perfect CSI is available at the user, then the downlink ergodic spectral efficiency given as
			\begin{equation} \label{eq:SpectralEfficiencyPerfect}
			{\rm{R}}_{lk} = \left( 1 -\frac{\tau_p}{\tau_c} \right) \mathbb{E}\left\{\log_2 \left( 1 + \mathrm{SINR}_{lk} \right)\right\}, \mbox{ [b/s/Hz]},
			\end{equation}
			where the SINR is given as
			\begin{equation}\label{eq:SinrPerfect}
			\mathrm{SINR}_{lk}= \frac{\eta_{lk}\left|{\alpha_{lk}^{lk}}\right|^2}{ \sum\limits_{\substack{k'=1,\\ k' \neq k}}^K \eta_{lk'} \left|\alpha_{lk}^{lk'}\right|^2 + \sum\limits_{\substack{l'=1,\\ l' \neq l}}^L \sum\limits_{k'=1}^K \eta_{l'k'}\left|\alpha_{lk}^{l'k'}\right|^2+ \sigma_{\mathrm{DL}}^2 }.
			\end{equation}
	\end{lemma}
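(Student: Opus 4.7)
\noindent\textbf{Proof proposal for Lemma~\ref{lemma:SEPerfect}.}
The plan is to start from the received-signal decomposition in \eqref{eq:ReceivedSigv1} and condition on the full set of channel realizations $\{\mathbf{g}_{lk}^{l'}\}$, which the user is now assumed to know perfectly. Under this conditioning, the effective channel gain $\alpha_{lk}^{lk}$ (and all other $\alpha_{lk}^{l'k'}$) becomes a deterministic constant, and \eqref{eq:ReceivedSigv1} reduces to a deterministic scalar Gaussian-like channel with input $s_{lk}[n]$ of unit variance, a known complex gain $\sqrt{\eta_{lk}}\,\alpha_{lk}^{lk}$ in front of the desired symbol, and an additive disturbance consisting of the intra-cell and inter-cell interference plus thermal noise.

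The second step is to compute the conditional variance of this disturbance. Because the data symbols $\{s_{l'k'}[n]\}$ are zero-mean, unit-variance, and mutually independent (and independent of $\tilde w_{lk}[n]$), the conditional variance given the channels is
\begin{equation*}
\sum_{\substack{k'=1\\k'\neq k}}^K \eta_{lk'}\bigl|\alpha_{lk}^{lk'}\bigr|^2
+ \sum_{\substack{l'=1\\l'\neq l}}^L\sum_{k'=1}^K \eta_{l'k'}\bigl|\alpha_{lk}^{l'k'}\bigr|^2
+ \sigma_{\mathrm{DL}}^2,
\end{equation*}
which is exactly the denominator of $\mathrm{SINR}_{lk}$ in \eqref{eq:SinrPerfect}. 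I would then invoke the standard worst-case uncorrelated-additive-noise bound (see, e.g., \cite{redbook}): for a scalar channel with a known deterministic gain, independent zero-mean interference/noise of a given second moment, and a Gaussian input $s_{lk}[n]\sim\mathcal{CN}(0,1)$, the conditional mutual information is lower-bounded by $\log_2(1+\mathrm{SINR}_{lk})$, where $\mathrm{SINR}_{lk}$ is the ratio of the desired signal power to the conditional interference-plus-noise power.

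The third step is to take the expectation over the channel realizations. Since the bound $\log_2(1+\mathrm{SINR}_{lk})$ holds for each channel realization, averaging gives the ergodic achievable rate $\mathbb{E}\{\log_2(1+\mathrm{SINR}_{lk})\}$ per channel use used for downlink data. Multiplying by the pre-log factor $(1-\tau_p/\tau_c)$ accounts for the fraction of each coherence interval consumed by uplink pilots, yielding \eqref{eq:SpectralEfficiencyPerfect}.

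The main technical subtlety, rather than an obstacle, is that the interference $\sum_{k'\neq k}\sqrt{\eta_{lk'}}\alpha_{lk}^{lk'}s_{lk'}[n]+\ldots$ is not conditionally Gaussian (it is a finite mixture), so one cannot claim equality with the conditional capacity; the worst-case Gaussian noise argument is what legitimises the $\log_2(1+\mathrm{SINR})$ bound, and I would cite \cite{redbook} for this step rather than re-derive it. All remaining manipulations are bookkeeping: grouping the desired term, identifying the variance of the residual, and inserting the pilot-overhead factor.
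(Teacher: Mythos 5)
Your proposal is correct and follows essentially the same route as the paper, which simply invokes the standard ergodic capacity bounding technique of \cite{redbook}: condition on the (perfectly known) channel realizations, treat the intra- and inter-cell interference plus noise as worst-case uncorrelated additive noise with the stated conditional variance, lower-bound the conditional mutual information by $\log_2(1+\mathrm{SINR}_{lk})$, average over the fading, and apply the pilot-overhead prefactor $(1-\tau_p/\tau_c)$. Nothing is missing; the worst-case-Gaussian subtlety you flag is exactly the point covered by the cited reference.
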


\subsection{Asymptotic Analysis}
Next, we will investigate the asymptotic performance of the proposed model-based estimator {\color{black}in the special case of i.i.d. Rayleigh fading channels. A similar analysis could be done for correlated Rayleigh fading but is omitted for brevity}.
Recall that we consider a setup where a set of $\tau_p$ mutually orthogonal pilots are reused among the cells. This is typically the case in cellular networks, where the number of pilots is less than the total number of users: $\tau_p < KL$.
This assumption will result in having pilot contamination in the channel estimation phase. 

We would like to analyze the term $\frac{\hat{\alpha}_{lk}^{lk}}{\alpha_{lk}^{lk}} $ that appears in the equalized received signal.
In the case of MR, we can plug $\xi_{lk}$ and $T_{lk}$, from Lemma~\ref{Lemma:asymptotic} and  Corollary~\ref{Lemma:ClosedFormMR}, respectively, to the estimator $\hat{\alpha}_{lk}^{lk}$ in \eqref{eq:alphalkhat} and rewrite it as follows
	\begin{equation}
	\begin{split}
	\frac{\hat{\alpha}_{lk}^{lk}}{\alpha_{lk}^{lk}} =&    \bigg(1 + \sum_{\substack{k'=1,\\ k' \neq k}}^K \frac{\eta_{lk'}}{\eta_{lk}}  \frac{\left|\alpha_{lk}^{lk'}\right|^2 -  \rho_{\rm dl}\beta^{l}_{lk}}{\big|\alpha_{lk}^{lk}\big|^2} +\sum_{l'=1, l'\neq l} \sum_{\substack{k'=1}}^K \frac{\eta_{l'k'}}{\eta_{lk}} \frac{\left| \alpha^{l'k'}_{lk} \right|^2 - \rho_{\rm dl}\beta^{l'}_{lk}}{\left| \alpha^{lk}_{lk} \right|^2}\\
	&+ \sum_{l' \in \mathcal{P}_{l}\setminus \{l\}} \frac{\eta_{l'k}}{\eta_{lk}}\frac{\left|\alpha_{lk}^{l'k}\right|^2 - M\rho_{\rm dl}\gamma^{l'}_{lk}}{\left|\alpha_{lk}^{lk}\right|^2} \bigg)^{1/2}.
	\end{split}
	\end{equation}
	{\color{black}
		 By replacing each $\alpha_{lk}^{l'k'}$ term with its original expression from \eqref{eq:alphaVal}, we obtain (after some straightforward algebra)
	\begin{equation}
	\begin{split}
	\frac{\hat{\alpha}_{lk}^{lk}}{\alpha_{lk}^{lk}} =&    \Bigg(1 + \sum_{\substack{k'=1,\\ k' \neq k}}^K \frac{\eta_{lk'} \gamma^{l}_{lk}}{\eta_{lk}\gamma^{l}_{lk'}}  \frac{\left|\left(\mathbf{g}_{lk}^{l}\right)^{\rm H} \hat{\mathbf{g}}_{lk'}^{l}\right|^2 -  \beta^{l}_{lk}}{\big|\left(\mathbf{g}_{lk}^{l}\right)^{\rm H} \hat{\mathbf{g}}_{lk}^{l}\big|^2} + \sum_{\substack{l'=1,\\ l'\neq l}} \sum_{\substack{k'=1}}^K \frac{\eta_{l'k'} \gamma^{l}_{lk}}{\eta_{lk}\gamma^{l'}_{l'k'}}  \frac{\left|\left(\mathbf{g}_{lk}^{l}\right)^{\rm H} \hat{\mathbf{g}}_{l'k'}^{l'}\right|^2 -  \beta^{l'}_{lk}}{\big|\left(\mathbf{g}_{lk}^{l}\right)^{\rm H} \hat{\mathbf{g}}_{lk}^{l}\big|^2}\\
	&+ \sum_{l' \in \mathcal{P}_{l}\setminus \{l\}} \frac{\eta_{l'k} \gamma^{l}_{lk}}{\eta_{lk} \gamma^{l'}_{l'k}}\frac{\left|\left(\mathbf{g}_{lk}^{l'}\right)^{\rm H} \hat{\mathbf{g}}_{l'k}^{l'}\right|^2 -M  \gamma^{l'}_{lk}}{\big|\left(\mathbf{g}_{lk}^{l}\right)^{\rm H} \hat{\mathbf{g}}_{lk}^{l}\big|^2} \Bigg)^{1/2}.
	\end{split}
	\end{equation}}

	 {\color{black} 
	 	For the ratios in the first two summations, we have 
	 	\begin{equation}
	 	 \frac{\eta_{lk'} \gamma^{l}_{lk}}{\eta_{lk}\gamma^{l}_{lk'}}\left(  \frac{\left|\frac{\left(\mathbf{g}_{lk}^{l}\right)^{\rm H} \hat{\mathbf{g}}_{lk'}^{l}}{M}\right|^2 -  \frac{\beta^{l}_{lk}}{M^2}}{\big|\frac{\left(\mathbf{g}_{lk}^{l}\right)^{\rm H} \hat{\mathbf{g}}_{lk}^{l}}{M}\big|^2}\right) \rightarrow 0
	 	\end{equation}
	 	  
	 	 \begin{equation}
	 	 \frac{\eta_{l'k'} \gamma^{l}_{lk}}{\eta_{lk}\gamma^{l'}_{l'k'}}\left(  \frac{\left|\frac{\left(\mathbf{g}_{lk}^{l}\right)^{\rm H} \hat{\mathbf{g}}_{l'k'}^{l'}}{M}\right|^2 -  \frac{\beta^{l}_{lk}}{M^2}}{\big|\frac{\left(\mathbf{g}_{lk}^{l}\right)^{\rm H} \hat{\mathbf{g}}_{lk}^{l}}{M}\big|^2}\right) \rightarrow 0
	 	  \end{equation}
		  as $M \rightarrow \infty$, because the terms in the numerators converge to zero, while the terms in the denominator have finite limits. However, for the last summation, as $M \rightarrow \infty$, we have 	
	 	\begin{equation}
	 	\frac{\eta_{l'k} \gamma^{l}_{lk}}{\eta_{lk}\gamma^{l'}_{l'k'}}\left(  \frac{\left|\frac{\left(\mathbf{g}_{lk}^{l'}\right)^{\rm H} \hat{\mathbf{g}}^{l'}_{l'k}}{M}\right|^2 -  \frac{M\gamma^{l'}_{lk}}{M^2}}{\big|\frac{\left(\mathbf{g}_{lk}^{l}\right)^{\rm H} \hat{\mathbf{g}}_{lk}^{l}}{M}\big|^2}\right) \rightarrow \frac{\sqrt{\hat{p}_{lk}}\eta_{l'k} \beta^{l'}_{lk}}{\sqrt{\hat{p}_{l'k}}\eta_{lk}\beta^{l'}_{l'k}},
	 	\end{equation}
	 	where we used the relation  \eqref{eq:ChannelEstRelation} to identify the fact that pilot contamination makes the first term in the numerator non-zero. This indicates that the ratio between the estimates and the  actual effective channel gain will not converge to one, due to pilot contamination. For example, there is always a risk that $\xi_{lk}  < T_{lk}$, so that the proposed estimator cannot extract any useful information from $\xi_{lk} $.}
	
	To demonstrate this effect when having a finite number of BS antennas, Fig.~\ref{fig:bound} displays the cumulative distribution function (CDF) of $\xi_{lk}$ for different realizations of small-scale fading, and the constant value of $T_{lk}$ is shown as a vertical line. A single setup with fixed large-scale fading coefficients, $M=100$, MR precoding is considered and $3$ users per cell. It can be seen that  $\xi_{lk}$ is not always greater than $T_{lk}$ due to the pilot contamination effect, which makes the fluctuations in the interference from the signals to pilot-sharing users relatively large. {\color{black}
		Even asymptotically as $M \to \infty$, the user is unable to separate this interference from the desired signal and there is a non-zero probability that $\xi_{lk}  < T_{lk}$ in a given coherence interval.
	The same thing would happen if ZF precoding was used.
	In the proposed method, we address this issue by selecting the switching point $\Theta_{lk}$ in \eqref{eq:alphalkhat} such that $ \xi_{lk}  - T_{lk} > 0$ so that the square-root provides a real number.
	The SE maximizing way of selecting the switching point is to evaluate the SE expression in Lemma~\ref{lemma:SE} for both cases in \eqref{eq:alphalkhat} and select the one providing the largest value. This approach will be used later in Section \ref{sec:numerical-results} when evaluating the SE performance of the model-based approach.} In Section \ref{sec:datadriven}, we design an alternative blind estimator by using neural networks.

	\begin{figure}[htb!]
		\centering
		\includegraphics[width=.6\columnwidth]{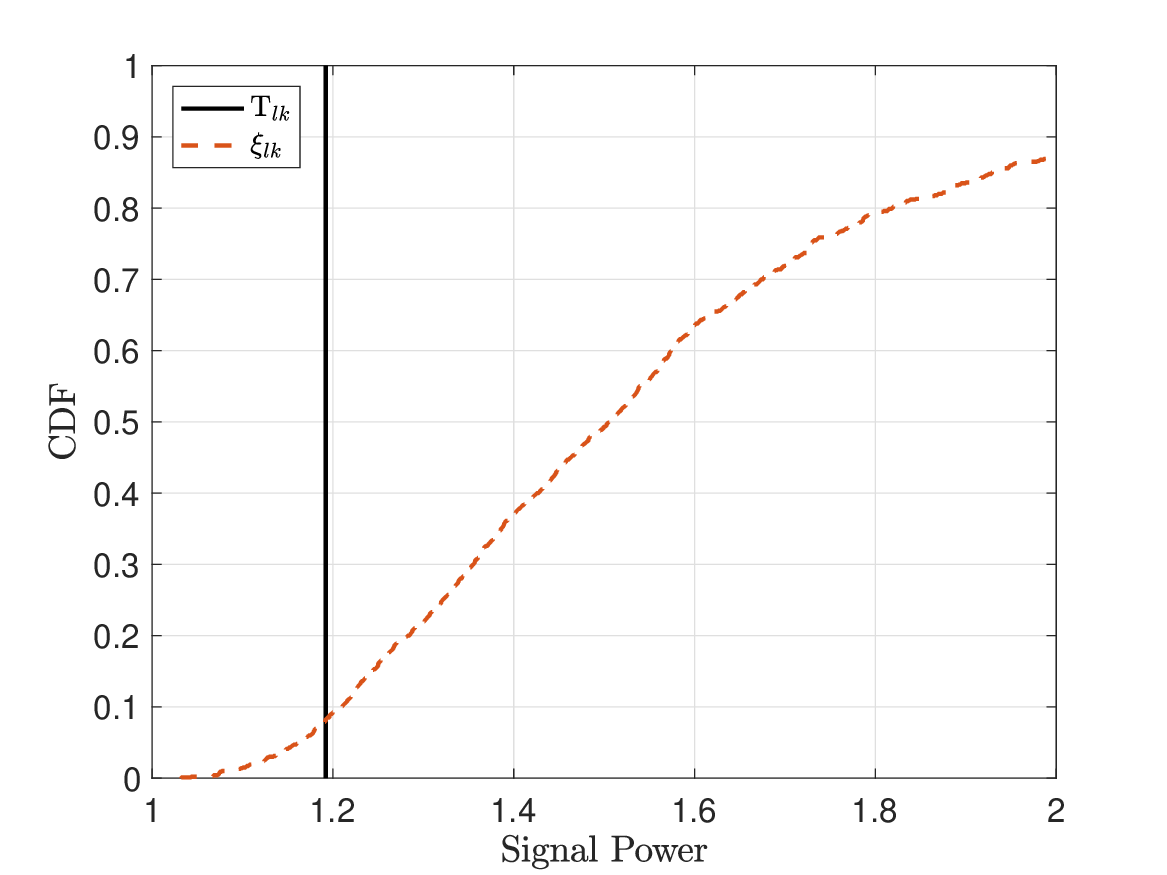}
		\caption{Comparing calculated power with $\xi_{lk}$ vs $T^{lk}_{lk}$.}
		\label{fig:bound}
	\end{figure}
	
\begin{remark}
	{\color{black} The model based approach is derived based on asymptotic properties that appear when $\tau_c,K \rightarrow \infty$, which is a common approach for developing blind algorithms.
	This implies that there is no guarantee that the estimator performs well when the number of users is small or the coherence interval is short. However, aligned with the previous works that apply random matrix theory to analyze Massive MIMO systems (e.g.,  \cite{6415388}), we expect the framework to work quite well when having a small number of users and practical coherence interval. To investigate how much better estimation performance can be achieved in practice, in Section \ref{sec:datadriven}, we design an alternative blind estimator by using neural networks and train them for the  non-asymptotic case.}
\end{remark}

\section{Data-driven Approach for Downlink Massive MIMO Channel Estimation}\label{sec:datadriven}

  {\color{black} The proposed blind downlink channel estimator in \eqref{eq:alphalkhat} is model-aided, in the sense that it was developed by studying the asymptotic properties of the system model. While the estimator is expected to work well when the coherence interval is large and there are many users per cell, there is no guarantee that the estimator will work well under the circumstances that occur in practical Massive MIMO systems. For example, the number of users per cell might be small, in particular, under low-traffic hours or when the coherence interval is relatively small. In this section, we take an alternative data-driven approach where we identify a mapping between the available information at the UE and the downlink effective channel gain. 
  More precisely, we tackle the mentioned limitations of the proposed model-based blind downlink channel estimator by training a fully-connected neural network for the same task.
  The goal is to determine under what conditions and to what extent the proposed model-aided estimator can be outperformed.}

{\color{black}  The universal approximation theorem states that one can approximate any continuous function between a given input vector and a desired output vector arbitrarily well using a sufficiently large fully-connected neural network \cite{goodfellow2016deep,hornik1989multilayer}. However, this theorem does not provide any exact details on the neural network structure (e.g., the number of layers and neurons) or what algorithms to utilize to find the optimal approximation. This effort must be carried out for every problem at hand.\footnote{\color{black} One can make an analogy to function approximation using polynomials: it is known that a continuous function can be approximated arbitrarily well by a polynomial, but we still need to find the right order and coefficients.}  Notice that, there may exist more than one neural network for a given prediction accuracy, but 
	in this paper, we use a fully-connected feed-forward neural network to estimate $\alpha^{lk}_{lk}$ from input data available at an arbitrary user $k$ in cell $l$ due to its simplicity.} As input to the neural network, we consider three features, $\xi^{'}_{lk}[n]$ that is given in \eqref{eq:Estv2}, $T_{lk}$ provided in \eqref{eq:TLK}, and $\eta_{lk}\rho_{\rm dl}\beta^{l}_{lk}$ for the user $k$ in cell $l$. The input is selected to enable the network to learn about the pathloss model, propagation environment, and mapping between the sample average power and effective channel gains. The input vector to the neural network is denoted as $\boldsymbol{\kappa}_{lk} \in \mathbb{R}^{P}$, where $P=3$ in the proposed design. The output is a scalar  $o_{lk}$ that is supposed to be equal to the absolute value  of downlink effective channel gain $\alpha^{lk}_{lk}$. 

\begin{remark}
	{\color{black} The reason for selecting the three features, $\xi^{'}_{lk}[n]$, $T_{lk}$, and $\eta_{lk}\rho_{dl}\beta_{lk}^{l}$, in the data-driven approach is to enable the neural network to learn the pathloss and propagation environment through  $\eta_{lk}\rho_{dl} \beta_{lk}^{l}$ and to also identify a mapping between the sample average power and the effective channel gains by using the received  average power $\xi^{'}_{lk}[n]$ and $T_{lk}$. 
	We have purposely kept the number of input features small to make it feasible to implement the neural network on user devices. 
	There are other relevant features that can be utilized, but we selected the set of features that provided the best performance during the feature selection process.}
\end{remark}

We consider a fully-connected neural network with $I$ layers to approximate the ideal  non-linear mapping from $\boldsymbol{\kappa}_{lk}$ to $o_{lk}$ \cite{o2017introduction,demir2019channel}. 
We designed our network with $I = 3$ hidden layers and the size of each layer is specified in Table \ref{table:1}.
{\color{black}The number of layers and neurons per layer were selected by taking a trial and error approach. In particular, we selected the number of neurons per layer and the number of layers by increasing them slowly and choosing different activation functions to find a network structure which offers good performance in terms on NMSE. The layout provided in Table \ref{table:1} with Rectified Linear Unit (ReLU) activation is the successful candidate. The details of the other parameters settings for the neural network are provided in Section \ref{sec:numerical-results}.}

{\color{black}We trained the neural network by taking a supervised deep learning approach.\footnote{\textcolor{black}{Supervised deep learning is the best option when exact downlink effective gains are available in the training stage. More efforts are required to achieve good prediction performance in case of imperfect effective gains, for example, using regularization or a hybrid between supervised and unsupervised deep learning approaches.}} To this end, we created a labeled training set consisting of inputs and corresponding optimal outputs (i.e., true effective channel gain) pairs. The training set is defined as $\{\boldsymbol{\kappa}^{d}_{lk}, \hat{o}^{d}_{lk}\}^{D}_{d=1}$, where  $D$ represents the number of points in the set. For training example $d$, $ \boldsymbol{\kappa}^{d}_{lk}$ is the input vector and the corresponding desired output is $ \hat{o}^{d}_{lk}$ \cite{o2017introduction,demir2019channel}. For the loss function, we exploit the mean absolute error (MAE) for the training stage as
	\begin{equation}
	\mathcal{L} (\mathsf{\Theta}) = \mathbb{E} \{ |\tilde{\alpha}_{lk}^{lk} - \alpha_{lk}^{lk} | \},
	\end{equation}
	where $\mathsf{\Theta}$ is the set of weights and biases assigned to the fully-connected neural networks and $\tilde{\alpha}_{lk}^{lk}$ is the effective downlink channel gain obtained by the neural network. Despite a high cost for multi-cell Massive MIMO communications, the perfect gain $\alpha_{lk}^{lk}$ can be obtained if all the users use orthogonal pilot signals with sufficiently large pilot power \cite{ngo2017no}. We refer to textbooks such as \cite{goodfellow2016deep} for a detailed description of the training process for neural networks.}

{\color{black}We generated the training data set based on the system model presented in the paper, and detailed design settings are provided in Section \ref{sec:numerical-results}. {\color{black} Since user locations and shadow fading coefficients are identically and independently distributed, the dataset consists of input-output pairs for a typical user $k$, which refers to a user, randomly located in cell $l$ for $2000$ large-scale and $1000$ small-scale fading realizations corresponding to $D = 2000000$ samples. Note that since we train a network for a \emph{typical} user, the same trained network can be utilized for every user in the cellular network. {\color{black} All simulations for generating the data set are implemented in MATLAB on a MacBook Pro laptop with CPU Intel Core i5, $2.3$ GHz Quad-Core, and $16$ GB RAM. The generating time for the data set with $2$ million samples depends on the number of users per cell. In the case of $K=3$, each sample takes $3.3$ milliseconds per sample which corresponds to less than two hours of simulation time, and for $K = 10$, it takes $12.5$ milliseconds per sample which approximately takes $7$ hours simulation time.}} The training is done offline, but the actual usage of the neural network is when it is used by the users in a cellular system, in online mode. In addition, in this work, the data is generated from the simulation setup. However, the input data is selected such that it is practical to obtain such data from measurements that can be made in a practical setup. The main challenge is to obtain the labels, but one potential solution is that the cellular system occasionally transmits orthogonal pilot sequences of length $\tau_{c}-\tau_{p}$, in an entire coherence interval in the downlink. These pilot sequences can be reused sparsely in the network (e.g., reuse 7) so that there is essentially no pilot contamination, and the SNR will be very high after despreading, so that the true $\alpha^{lk}_{lk}$ can be estimated accurately. These sequences can also be utilized to estimate and calibrate other aspects of the system.}

\begin{table}[t]
	\centering
	\caption{Layout of the deep neural network.} 
	\begin{center}
		\begin{tabular}{| c| c| c| c|}
			\hline
			& Neurons & Parameters& Activation function\\ 
			\hline
			Layer 1 & 32 &  256 & ReLU \\
			\hline
			Layer 2 & 64 & 2112 & ReLU \\
			\hline
			Layer 3 & 64 & 4160 & ReLU \\
			\hline
		\end{tabular}\label{table:1}
	\end{center} 
\end{table} 

 {\color{black}To evaluate the SE performance of the data-driven approach, we can derive a similar downlink ergodic SE expression as given in Lemma \ref{lemma:SE}. The derivations follow similar lines as the model-aided estimator. Here, we perform received signal equalization in \eqref{eq:NewReceiSig} with the effective gain estimates from the data-driven approach, and the rest of the derivations follows as the model-based approach. Note that the trained neural network for $n$th data symbol is applicable for all as we have i.i.d. symbols. Hence, there is no need to train neural network $n$ times in each coherence interval, and the bound in Lemma \ref{lemma:SE} provides a lower bound on the ergodic channel capacity.}
	
\begin{remark}
	{\color{black} This paper focuses on the downlink channel estimation quality in a comparison between the model-based and data-driven approaches for given transmit power coefficients.  However, an extension to the power allocation for a specific utility metric should be interesting for future work. The main structure of the fully-connected neural network might be kept the same, but a fine-tuning should be made to learn new features from the optimal power allocation.}
\end{remark}

\section{Numerical Results} \label{sec:numerical-results}
	This section provides a numerical comparison of our proposed model-based and data-driven approaches, as well as a comparison with the conventional hardening bound that uses $\mathbb{E}\{ \alpha_{lk}^{lk} \}$ as the estimate of $\alpha_{lk}^{lk}$ when the user decodes the downlink data. In the simulation setup, we consider a multi-cell Massive MIMO network consisting of $4$ cells. We model the cellular network with a grid layout in a 500\,m $\times$ 500\,m area where each square cell has a BS in the center. We use the wrap-around technique to avoid edge effects. Each BS is serving $K$ users, which are uniformly distributed in the coverage area of their serving BS while keeping a minimum distance of $35\,$m. Furthermore, each BS has $M = 64$ antennas, and each coherence interval contains $500$ symbols {\color{black} and we have pilot reuse factor $f=1$}. We model the macroscopic large-scale fading coefficients as \cite{bjornson2017massive}
	\begin{equation}\label{eq:largeScale}
	\beta^{l}_{l'k'} \left[{\rm dB}\right] = -35 - 36.7\log_{10}\left(d^{l}_{l'k}/1\,\rm{m}\right) + F^{l}_{l'k'} ,
	\end{equation}
	$d^{l}_{l'k'}$ denotes the distance between user $k'$ located in cell $l'$ to BS $l$ and $F^{l}_{l'k'}$ is shadow fading generated from a log-normal distribution with standard deviation of $7\,$dB and it is generated independently for each user. 
	To ensure that each user has its largest large-scale fading from the BS in its own cell, we regenerated the shadow fading realizations whenever this was not the case. It means that $\beta^{l}_{lk}$ is the largest among all $\beta^{l'}_{lk}$, $l'  = 1,\dots,L$. We consider communication over a $20\,$MHz bandwidth and the noise variance is $-94\,$dBm. We assume equal power control scheme in the downlink data transmission, and the uplink transmit power of users is set to $100\,$mW. We assume that each BS equipped with a horizontal uniform linear array with half-wavelength antenna spacing and the spatial correlation matrix of user $k$ located in cell $l'$ to the BS $l$ is modeled by the approximate Gaussian local scattering model provided in \cite[Ch.~2.6]{bjornson2017massive} with $(m,n)$th elements given by
	\begin{equation} \label{eq:LSM}
	\left[\mathbf{R}^{l}_{l'k}\right]_{m,n}  =\beta^{l}_{l'k} e^{\pi j (m-n)\sin (\varphi^{l}_{l'k})} e^{- \frac{\sigma^2_{\varphi}}{2} (\pi (m-n)\cos (\varphi^{l}_{l'k}))^2}.
	\end{equation}
	This model assumes that each user has a scattering cluster around it while there are no other scattering clusters. In \eqref{eq:LSM}, $\varphi^{l}_{l'k}$ denotes the nominal angle of arrival (AoA) from user $k$ in cell $l'$ to the BS $l$. It is also assumed that the multipath components of a cluster have Gaussian distributed AoA around nominal AoA with an angular standard deviation (ASD) $\sigma_{\varphi} = 7$ degree in the simulations. Note that for the uncorrelated fading channel, we generate the spatial correlation matrices as 
	\begin{equation}
	\mathbf{R}^{l}_{l'k}  =\beta^{l}_{l'k} \mathbf{I}_{M}
	\end{equation}
	where the large-scale fading coefficients i.e., $\beta^{l}_{l'k}$ are modeled as in \eqref{eq:largeScale}. The reason to consider both cases of uncorrelated and correlated fading models is to highlight the differences between these models. In particular, the uncorrelated fading case offers the highest channel hardening level, while the correlated fading case might feature little hardening when the ASD is small.

To generate the data-driven simulation results, we used a fully-connected neural networks model. The detailed design specifications of the layout including number of neurons per layer, each layer's activation functions and number of parameters are provided in Table \ref{table:1}. {\color{black} The entire data set consists of $2000000$ input-output vector pairs from which, we selected $400000$ for training, $100000$ for validation and the rest $1500000$ for the testing phase. {\color{black} The neural network model is implemented by using The Keras open-source library in Python and the training time complexity for $400000$ training and $100000$ validation samples is around $150$ seconds.}} In the neural network model's training phase, we select Adam optimizer \cite{kingma2017adam}, and the loss function is set to MAE. The hyper-parameters in the model are selected as the following: learning rate of Adam optimizer's is $0.01$, and batch size is $128$, and the number of epochs is equal to $200$.\footnote{{\color{black} In this paper, we assume that the training phase is implemented offline so the network can afford a fixed training rate. In our simulations, we have observed that a learning rate of $0.01$ gives good results. However, an adaptive learning rate may accelerate the training phase.}}

To evaluate the performance of the proposed model-based and data-driven approaches, we investigate two important metrics: the NMSE and SE. The NMSE of a user $k$ in cell $l$ is defined in \eqref{eq:NMSE}. We plot the CDF of NMSE when the median downlink SNR, i.e., $\rm{SNR_{dl}}$ of a cell-edge user is $10\,$dB. The two different benchmarks  \textcolor{black}{``Hardening bound", ``$\tau_c = \infty$"} and the results of two proposed approaches  \textcolor{black}{``Proposed: model-aided", ``Proposed: data-driven"}, are included for comparison defined as
\begin{enumerate}
	\item The state-of-the-art named as \textcolor{black}{``Hardening bound"}, that uses $\mathbb{E}\{ \alpha_{lk}^{lk} \}$ as the estimate of $\alpha_{lk}^{lk}$, which has been popularly used in the Massive MIMO literature, for example \cite{sanguinetti2019toward, yang2017massive} and references therein.
	\item In  $\tau_c = \infty$, it is assumed that the user knows the asymptotic value of $\xi_{lk}$ thanks to an infinite time interval, which is used to study the asymptotic behavior of the proposed model-based approach  \cite{ngo2017no}.
	\item The \textcolor{black}{``Proposed: model-aided"} and \textcolor{black}{``Proposed: data-driven"} are the results from our model-based and data-driven approaches, respectively. 
\end{enumerate}

Figs.~\ref{fig:correlatedMRMSE} and \ref{fig:correlatedZFMSE} depict the NMSE at the BSs for correlated Rayleigh fading channel models with MR and ZF precoding, respectively. The curve for the hardening bound is the rightmost for both cases which shows the worst NMSE performance. 
It can be seen from the figures that ``Proposed: model-aided'' and ``Proposed: data-driven'' perform better than the hardening bound for both MR and ZF. In the case of MR, data-driven approach is performing better than model-aided and asymptotic results. In ZF, the result for both proposed methods are comparable to each other and they perform close to the curve corresponding to asymptotic result.

\begin{figure*}[t]
	\begin{minipage}{0.48\textwidth}
		\centering
		\includegraphics[trim=0.5cm 0.0cm 1.5cm 0.8cm, clip=true, width=3.0in]{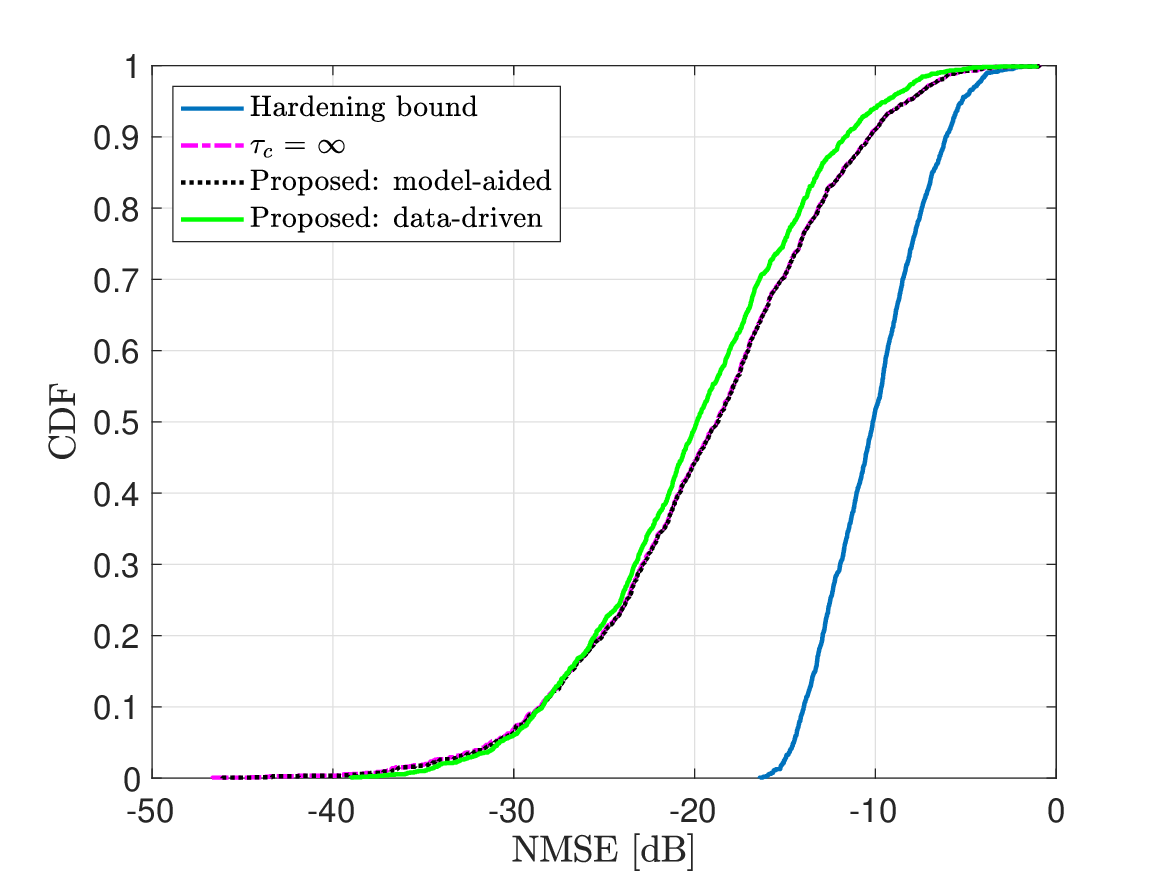} \vspace{-0.5cm}
		\caption{The NMSE for correlated channel model and MR precoding at BSs.}
		\label{fig:correlatedMRMSE}
	\end{minipage}
	\hfill
	\begin{minipage}{0.48\textwidth}
		\centering
		\includegraphics[trim=0.5cm 0.0cm 1.5cm 0.8cm, clip=true, width=3.0in]{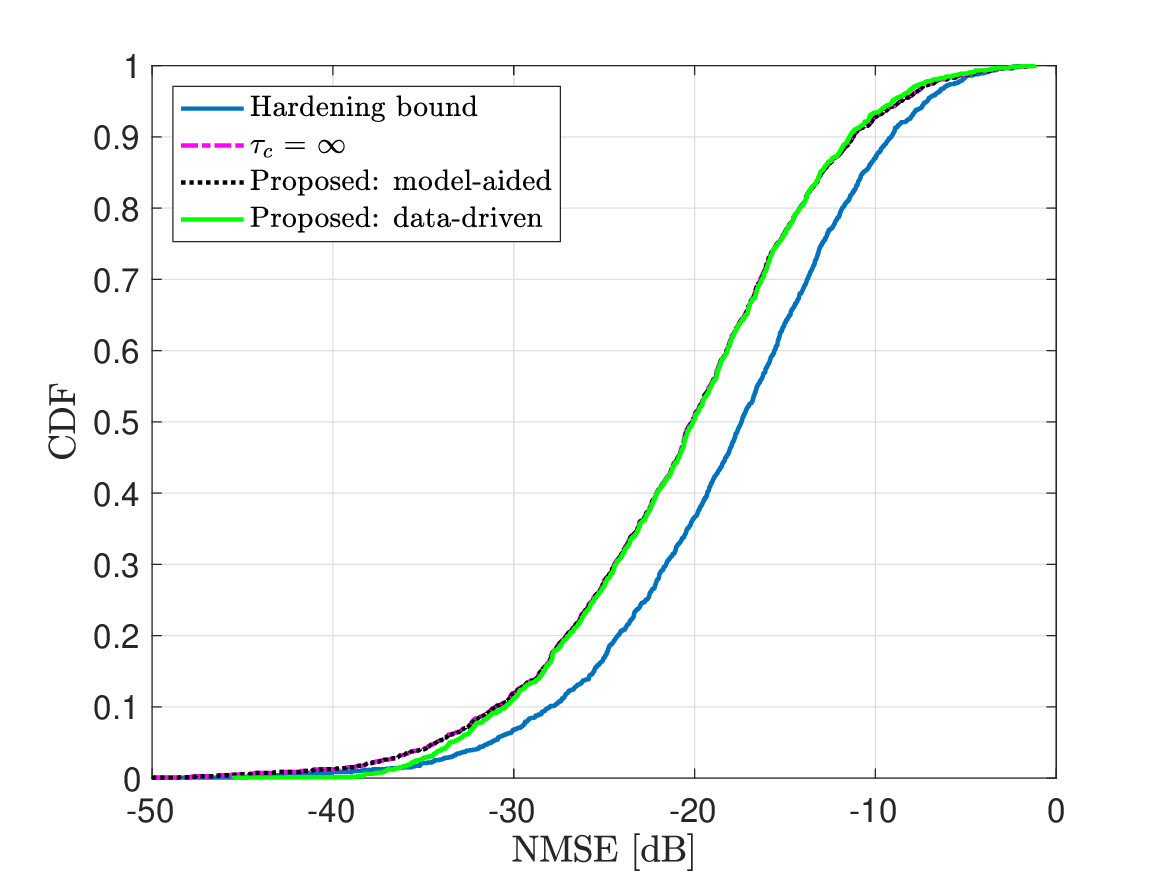} \vspace{-0.5cm}
		\caption{The NMSE for correlated channel model and ZF precoding at BSs.}
		\label{fig:correlatedZFMSE}
	\end{minipage}
\end{figure*}

The NMSE results are mainly targeted to provide a rough comparison of the proposed model-aided and data-driven approaches with the conventional hardening approach that uses $\mathbb{E}\{ \alpha_{lk}^{lk} \}$ as the estimate of $\alpha_{lk}^{lk}$. To determine the performance of the proposed approaches in practical systems, we need to compare the SE that they are delivering. 
Figs.~\ref{fig:SE_MR_K310} and \ref{fig:SE_ZF_K310} show the CDF of the SE per user for MR and ZF precoding at the BSs, respectively, when the median DL SNR is $10$\,dB for cell-edge users. 
For MR precoding, Fig.~\ref{fig:SE_MR_K310} shows that there is a significant SE improvement when using the model-based approach compared to hardening bound. The performance improvement is particular large for users with good channel conditions.
This shows that the conventional hardening bound can greatly underestimate the achievable performance over Massive MIMO channels that feature little channel hardening, which is the case for our considered spatially correlated fading channel model with a small ASD. 
This assumption results in having a low-rank correlation matrix with a few dominant eigenvalues.
The data-driven approach results in better SE than the model-based approach in the lower 40 \% of the CDF curve and comparable for the other 60 \%. As a reference, we also show the SE that is achievable by utilizing the bound provided in Lemma  \ref{lemma:SEPerfect}, when knowing the channel gains perfectly at the user (denoted as perfect CSI) and there is a significant performance difference.

\begin{figure*}[t]
	\begin{minipage}{0.48\textwidth}
		\centering
		\includegraphics[width=1.0\columnwidth]{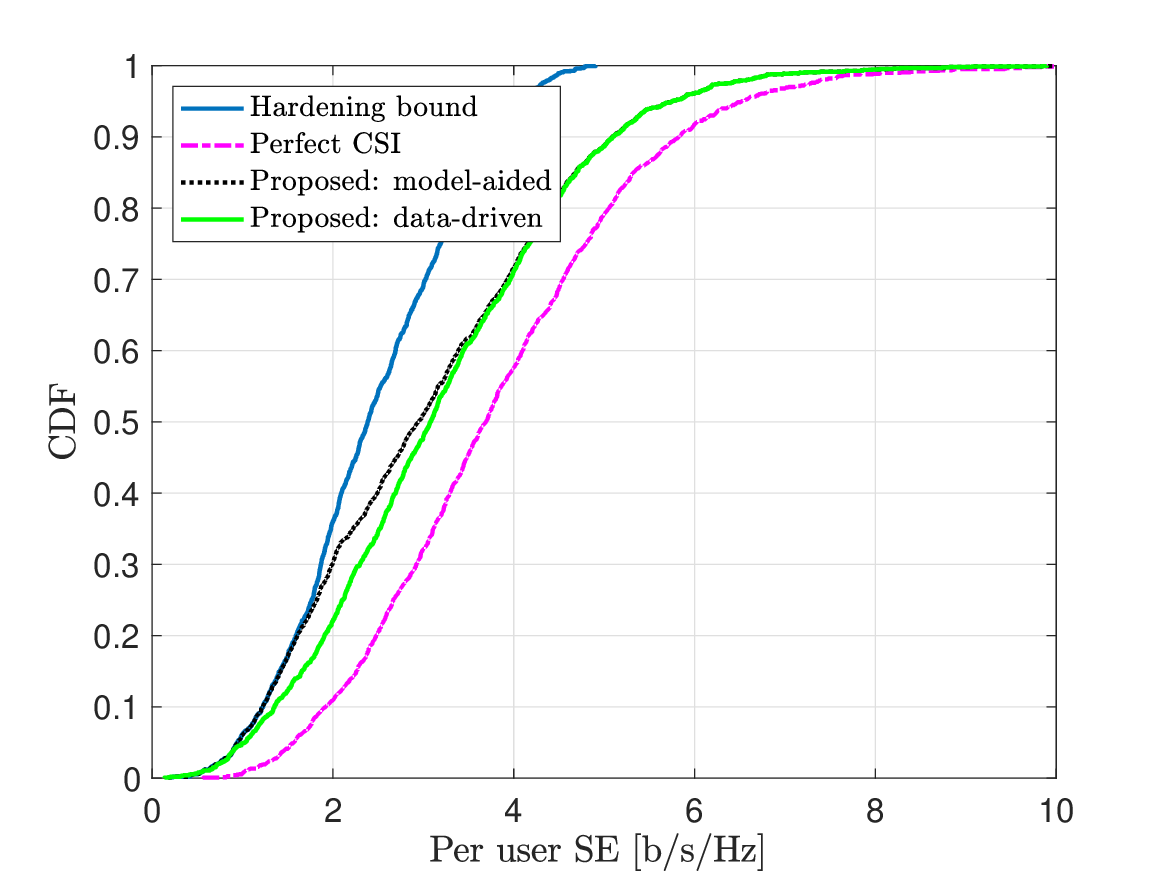}
		\caption{CDF of the SE per user for the correlated channel model and MR precoding, with DL SNR = $10\,$dB and $K = 3$.}
		\label{fig:SE_MR_K310}
	\end{minipage}
	\hfill
	\begin{minipage}{0.48\textwidth}
		\centering
		\includegraphics[width=1.0\columnwidth]{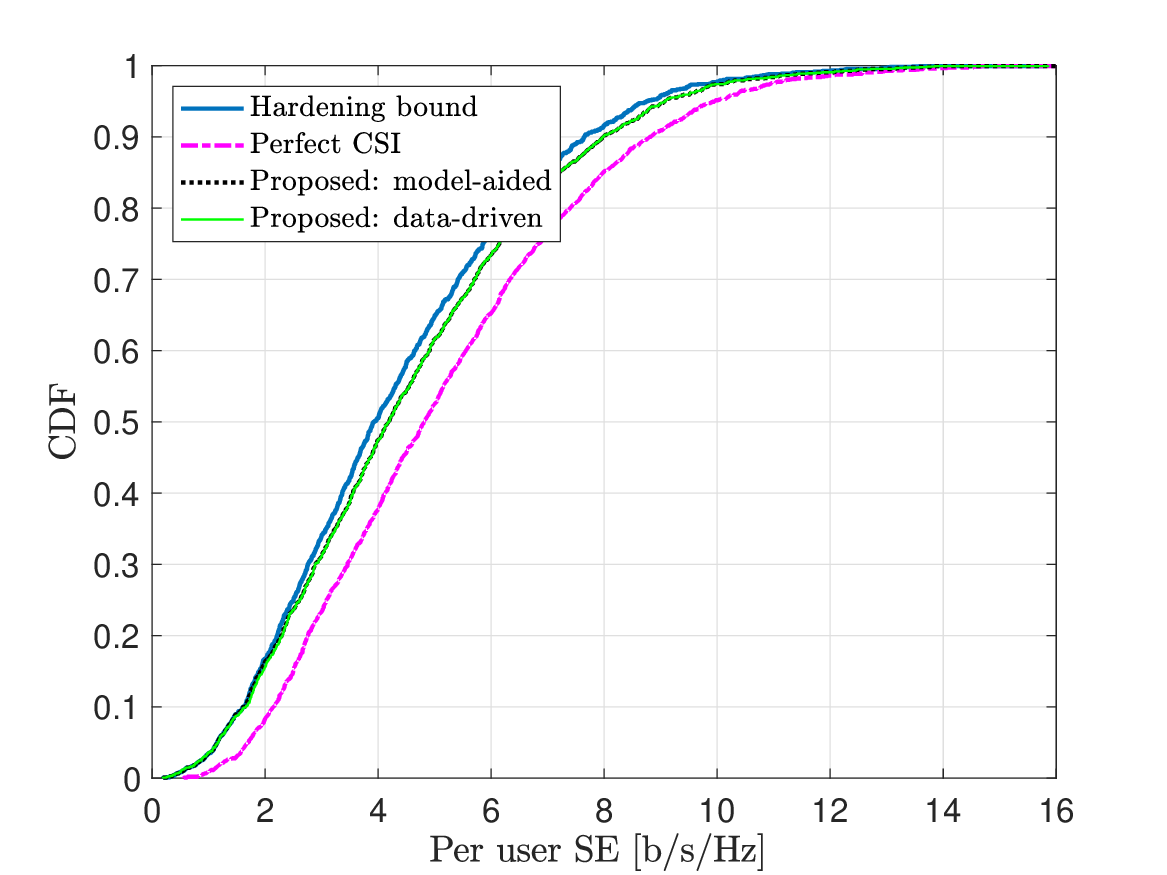}
		\caption{CDF of the SE per user for the correlated channel model and ZF precoding, with DL SNR = $10\,$dB and $K = 3$.}
		\label{fig:SE_ZF_K310}
	\end{minipage}
\end{figure*}

The ZF precoding results in Fig.~\ref{fig:SE_ZF_K310} show that the performance gap between the proposed hardening bound and the model-based and data-aided estimator is smaller. Note that the limitation of using the mean of effective channel gains for decoding the desired data in low-hardening channels is more of a physical limitation. However, our proposed approaches are applicable for low- and high-hardening models and are more useful for the low-hardening channel conditions. The selection of precoding vectors and normalization of them is the network design preference which can affect the performance behavior. In this paper we utilized average-normalize approach to normalize the precoding vectors as give in \eqref{eq:PrecodVecCorr} and \eqref{eq:PrecodVec}. However, one can design the precoding vectors with different normalization.

Furthermore, MR and ZF precoding are behaving differently on favoring users with strong and weak channels. Hence, one should consider this for having a fair comparison of their performance. In addition, different normalization of precoding vectors and SE bounding techniques can potentially show different performances.  Using ZF with the average-normalized tends to give a closer performance for  our proposed approach and hardening bound. Therefore, for the average-normalized approach used in this paper, all the considered benchmarks are close to the perfect CSI with ZF precoding. Therefore, there is less room for them to show more distinction. Normalization plays an important role in the behavior seen in this results and, using different vector normalization for precoding vectors can potentially show different performance behavior. Hence, the precoding technique and normalization should be considered as the design preference of the networks.

{\color{black}Next, in order to analyze the performance of the proposed model-aided and data-driven channel estimation methods in more diverse scenarios, we investigate the effect of increasing the number of users per cell from $K = 3$ to $K = 10$. The results are provided in  Figs. 6 and 7 for MR and ZF precoding, respectively. In addition, as we assumed pilot reuse factor $f=1$ in the simulation setup and $\tau_{p} = fK$. Increasing K reflects that we increase $\tau_{p}$ as well. Therefore the result reflects the effect of having different length of pilots on the performance of proposed methods. It can be seen that,} the gap between the proposed approaches and the case with perfect channel knowledge has shrinked. The estimated effective channel gain is getting closer to its asymptotic limit by increasing the number of users, resulting in a comparable performance for the ergodic SE given in Lemma \ref{lemma:SE} and hardening bound. In addition, comparing with the case of $3$ users, the SEs are decreasing, which indicates that interference is becoming more dominant which is also affecting the result of Lemma \ref{lemma:SEPerfect}, denoted as perfect CSI. Note that the results of data-driven for $K=10$ are obtained by utilizing the trained model for the case of $K=3$. This indicate that data-driven model is robust towards change in the number of users in the cellular network.

\begin{figure*}[t]
	\begin{minipage}{0.48\textwidth}
	\centering
	\includegraphics[width=1\columnwidth]{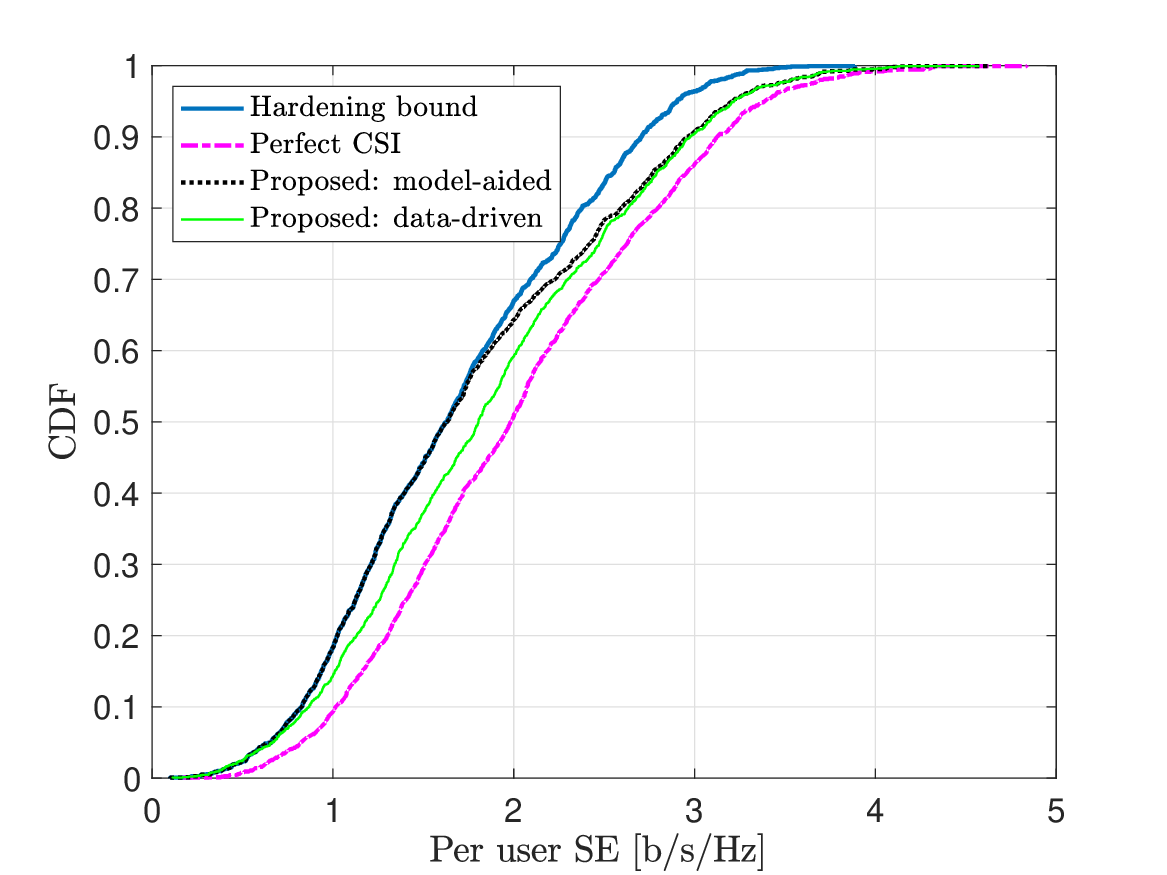}
	\caption{CDF of the SE per user for the correlated channel model and MR precoding, with DL SNR = $10\,$dB and $K = 10$.}
	\label{fig:SE_MR_K1010}
	\end{minipage}
	\hfill
	\begin{minipage}{0.48\textwidth}
	\centering
	\includegraphics[width=1\columnwidth]{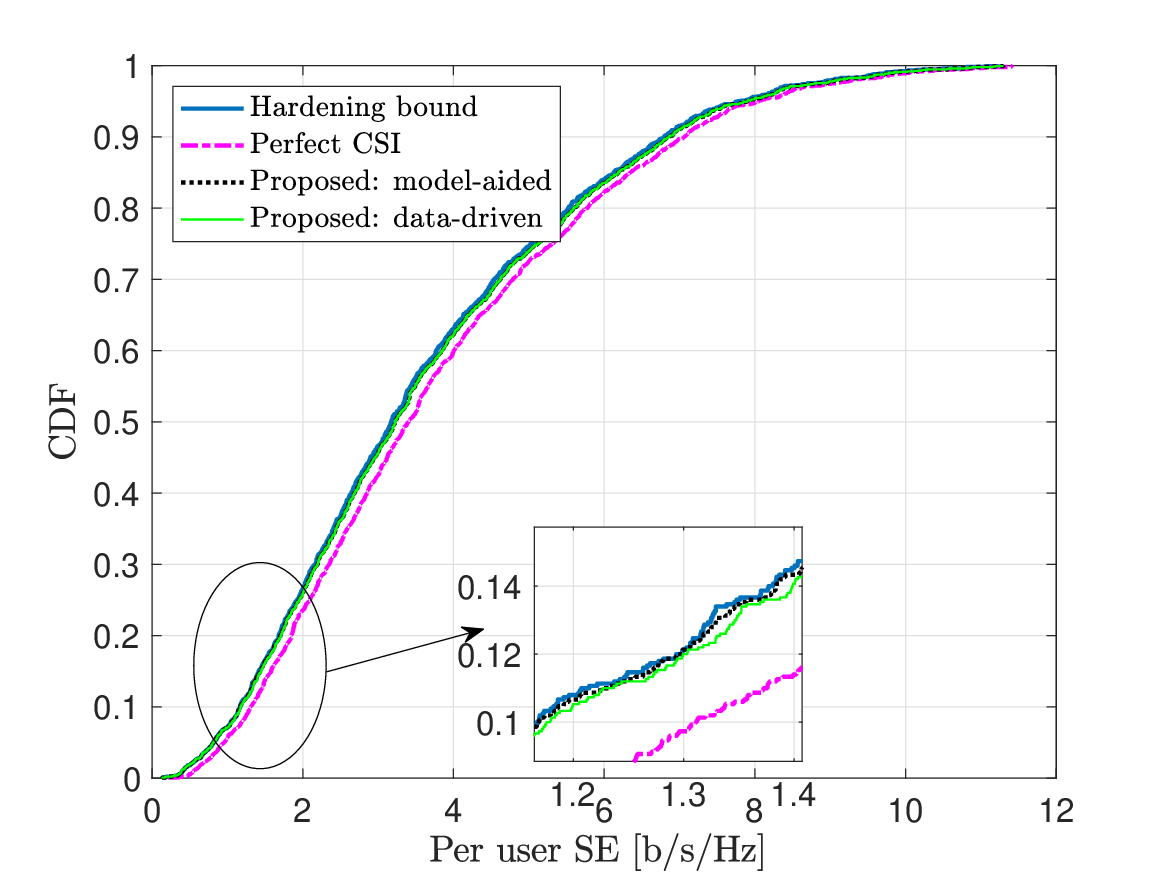}
	\caption{CDF of the SE per user for the correlated channel model and ZF precoding, with DL SNR = $10\,$dB and and $K = 10$.}
	\label{fig:SE_ZF_K1010}
	\end{minipage}
\end{figure*}
Using the mean of effective channel gain to decode the desired data signal at the user is a legitimate assumption when the channel hardening holds. To show this, we provide per user SE results for the case of uncorrelated Rayleigh fading channel model in Figs.~\ref{fig:SE_un_MR_K30} and \ref{fig:SE_un_ZF_K30}. \textcolor{black}{There is a notable gap between the proposed approaches and the hardening bound when considering MR, while the gap is rather small when considering ZF.
The reason is that MR creates large variations in the effective channel gain by assigning more power when the small-scale fading gives a strong realization and less power when the realization is weak. ZF does the opposite and, therefore, provides a better channel hardening behavior.}

\begin{figure*}[t]
	\begin{minipage}{0.48\textwidth}	
	\centering
	\includegraphics[width=1\columnwidth]{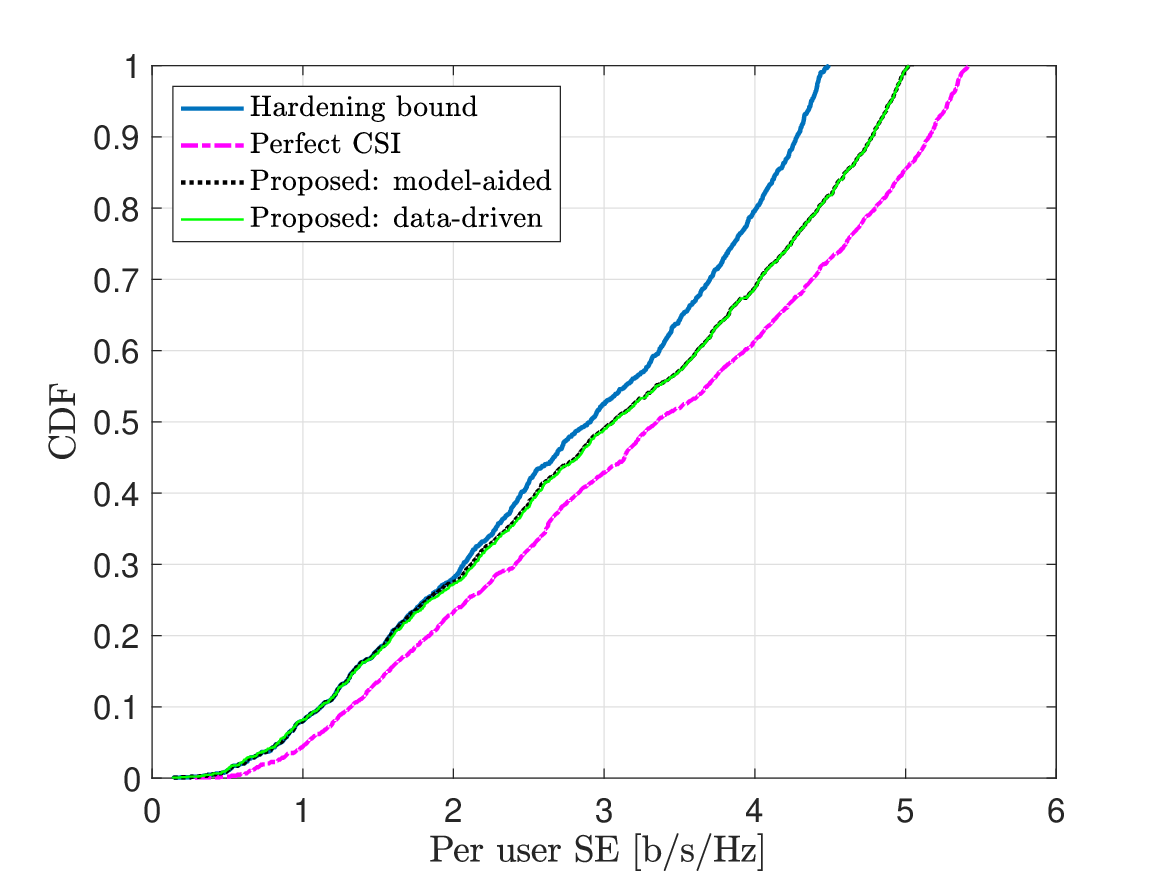}
	\caption{CDF of the SE per user for the uncorrelated channel model and MR precoding, with  DL SNR = $0\,$dB and $K = 3$.}
	\label{fig:SE_un_MR_K30}
	\end{minipage}
	\hfill
\begin{minipage}{0.48\textwidth}
	\centering
	\includegraphics[width=1\columnwidth]{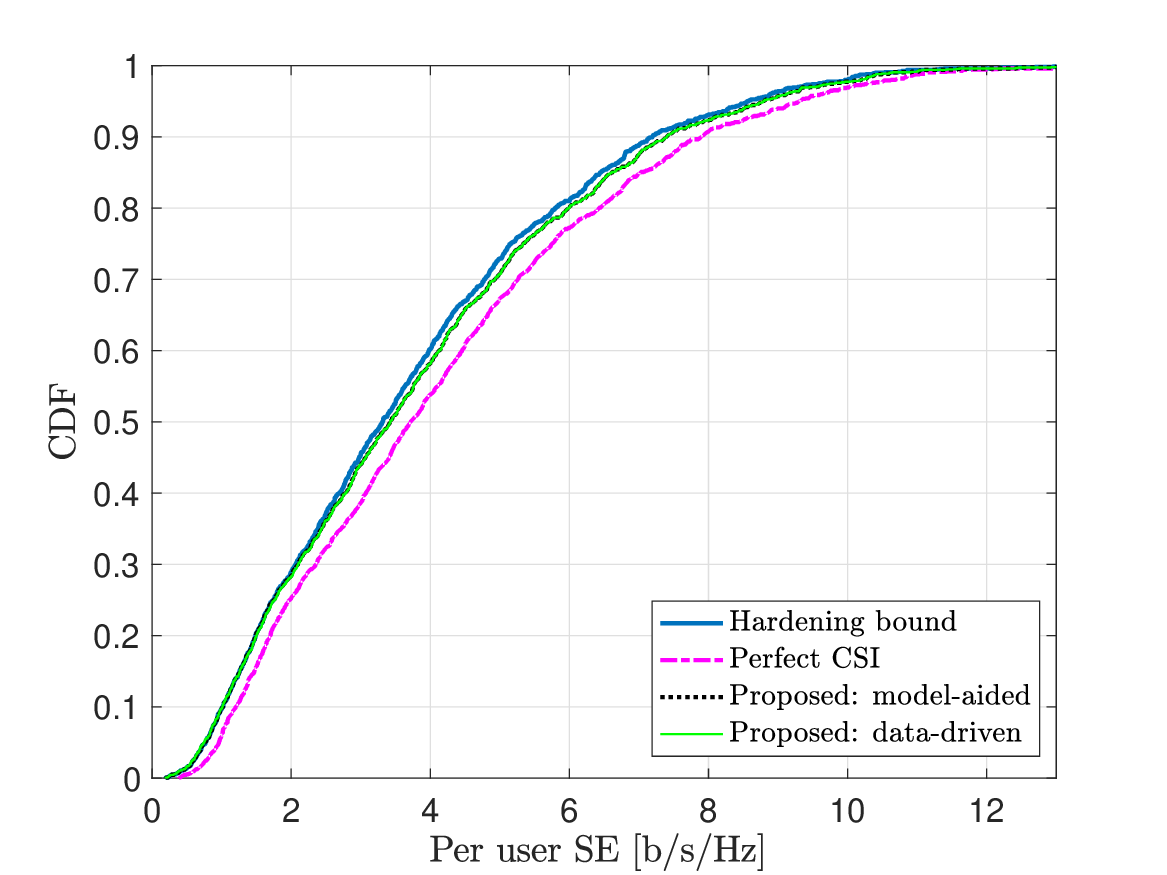}
	\caption{CDF of the SE per user for the uncorrelated channel model and ZF precoding, with DL SNR = $0\,$dB and $K = 3$.}
	\label{fig:SE_un_ZF_K30}
	\end{minipage}
\end{figure*}
{\color{black} To highlight the importance of the length of the coherence interval $\tau_c$, Fig.~\ref{fig:tau_c_length} provides the NMSE for an uncorrelated channel model with MR precoding at the BSs versus $\tau_c$. The perfect CSI and hardening bounds offer an NMSE independent of the coherence interval length $\tau_c$ since the channel estimates and channel statistics are estimated during the uplink pilot training phase with a fixed length $\tau_p$. It can be seen that the NMSE performance of the model-aided approach depends on the length of the coherence interval. Specifically, a longer coherence intervals yields a lower NMSE value and approaches the perfect CSI as a consequence of the law of large numbers when more and more data are collected.}

To provide more insights into the properties of the proposed estimators, Fig.~\ref{fig:mapping} shows one feature of the input data, namely $\xi'_{lk}$, versus the true value of $\alpha^{lk}_{lk}$. We consider one user (i.e., one large-scale realization) and $1000$ small-scale realization for MR precoding and $K=3$. While these data points are represented by 1000 circles, the lines represent the estimate of $\alpha^{lk}_{lk}$ that the two proposed estimators are providing for each value of $\xi'_{lk}$.
Note that the two estimators provide deterministic mappings while the true relationship is random since the data are random (as is always the case in estimation theory). We can see that the proposed estimators provide almost the same estimate when $\xi'_{lk}$ is large, while there is a gap between the estimators when $\xi'_{lk}$ is small. \textcolor{black}{The reason that the data-driven estimator obtains a better mapping than the model-based estimator 
is that it is trained for a system with finite $K$ and $\tau_c$, while the model-aided estimator is obtained from asymptotic arguments. Hence, the data-driven estimator provides lower NMSE and SE in the considered setup where $K$ is small.}

\textcolor{black}{ Fig.~\ref{fig:ASD} shows the impact of the spatial correlation level, which is expressed by the ASD. A small ASD represents high spatial correlation, and vice versa. The NMSE produced by the hardening bound is heavily depending on the strength of the spatial correlation. Specifically, the NMSE is quite high with a low ASD value (i.e., strong spatial correlation) since the channel vectors are less hardened. The approximation by the hardening bound becomes more accurate as the ASD increases (i.e., the channels approach spatially uncorrelated fading). In contrast, it is worth to notice that the proposed model-based approach is almost insensitive to the spatial correlation and close to the optimistic solution in the limiting regime when $\tau_c \rightarrow \infty$.}

\begin{figure*}[t]
	
\begin{minipage}{0.48\textwidth}
		\centering
		\includegraphics[width=1\columnwidth]{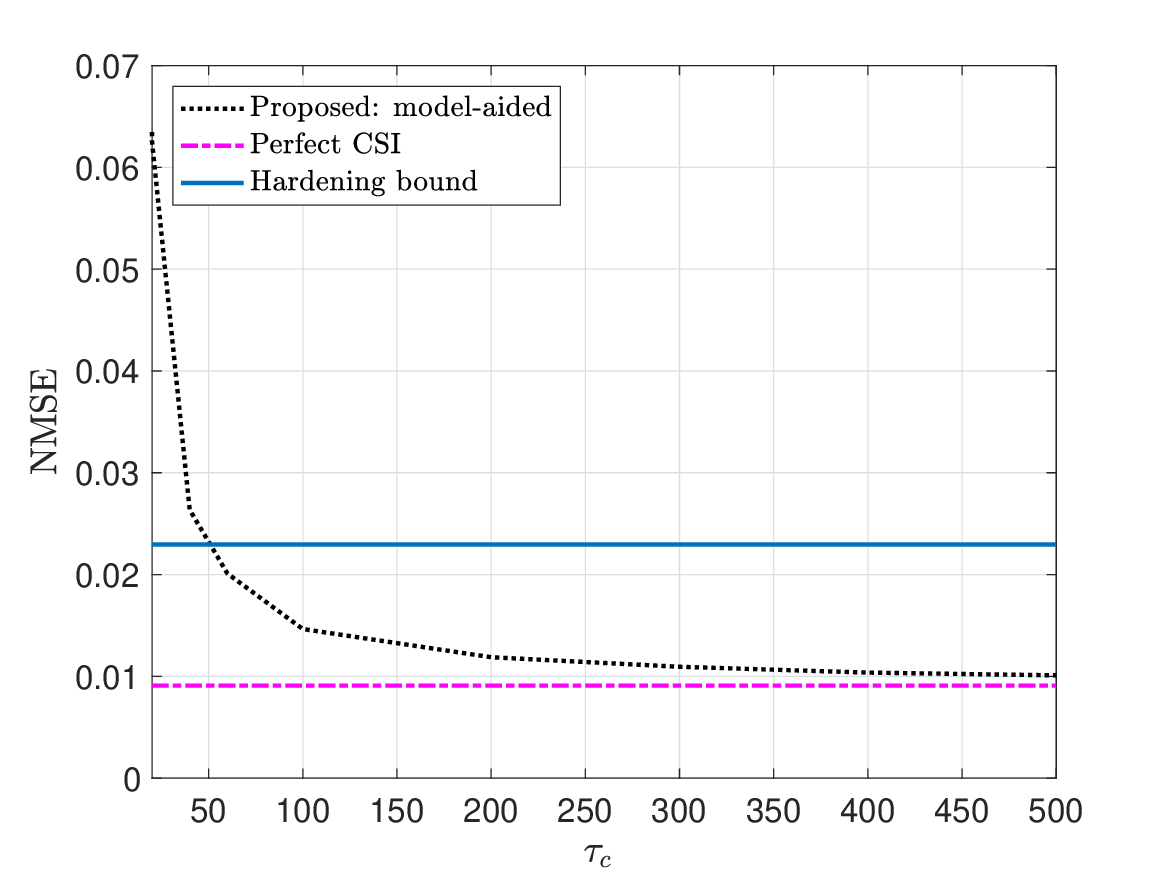}
		\caption{NMSE v.s. the coherence interval $\tau_c$ for uncorrelated channel model, MR precoding, DL SNR $=0\,$dB,  and $K =3$.}
		\label{fig:tau_c_length}
	\end{minipage}
	\hfill
	\begin{minipage}{0.48\textwidth}
	\centering
	\includegraphics[width=1\columnwidth]{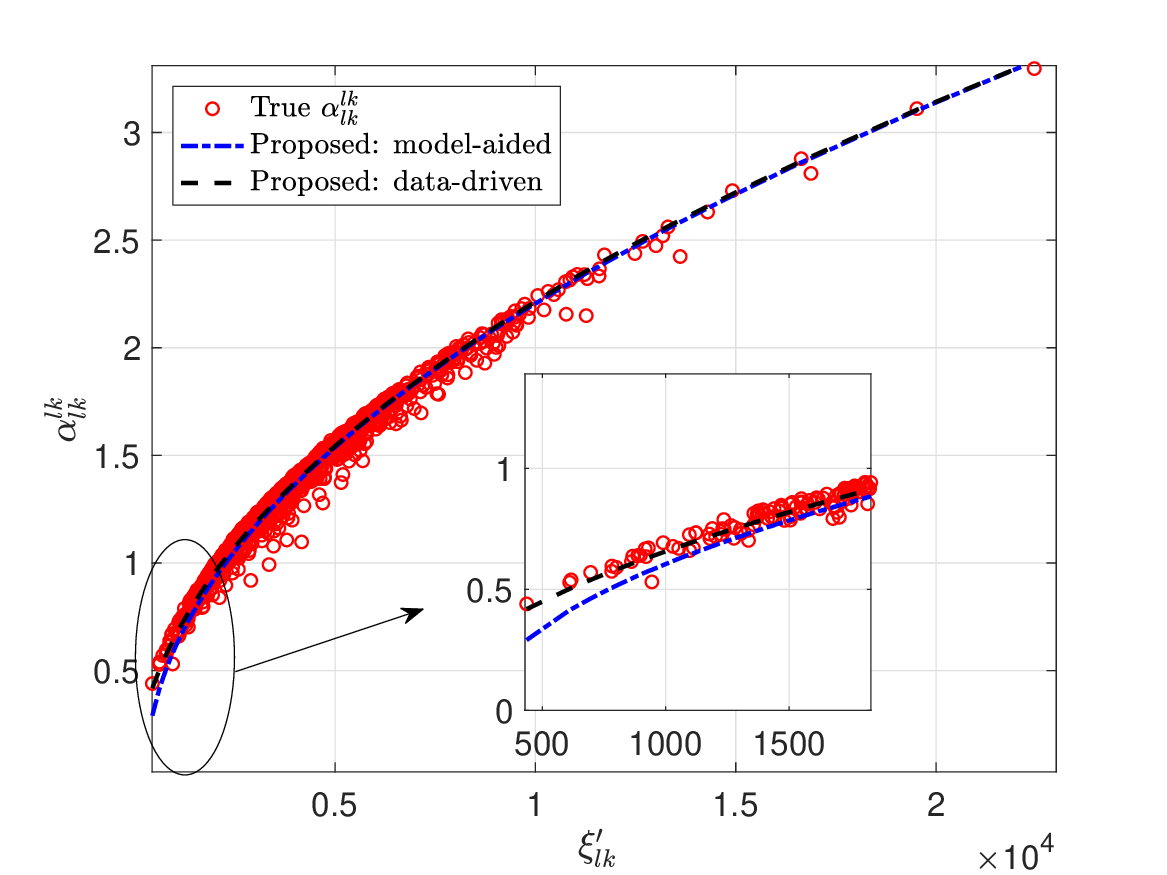}
	\caption{Representation of input $\xi'_{lk}$ vs true  $\alpha^{lk}_{lk}$ i.e. optimal output and estimates  of it from proposed estimators.}
	\label{fig:mapping}
	\end{minipage}
\end{figure*}

	\begin{figure}
	\centering
	\includegraphics[width=.6\columnwidth]{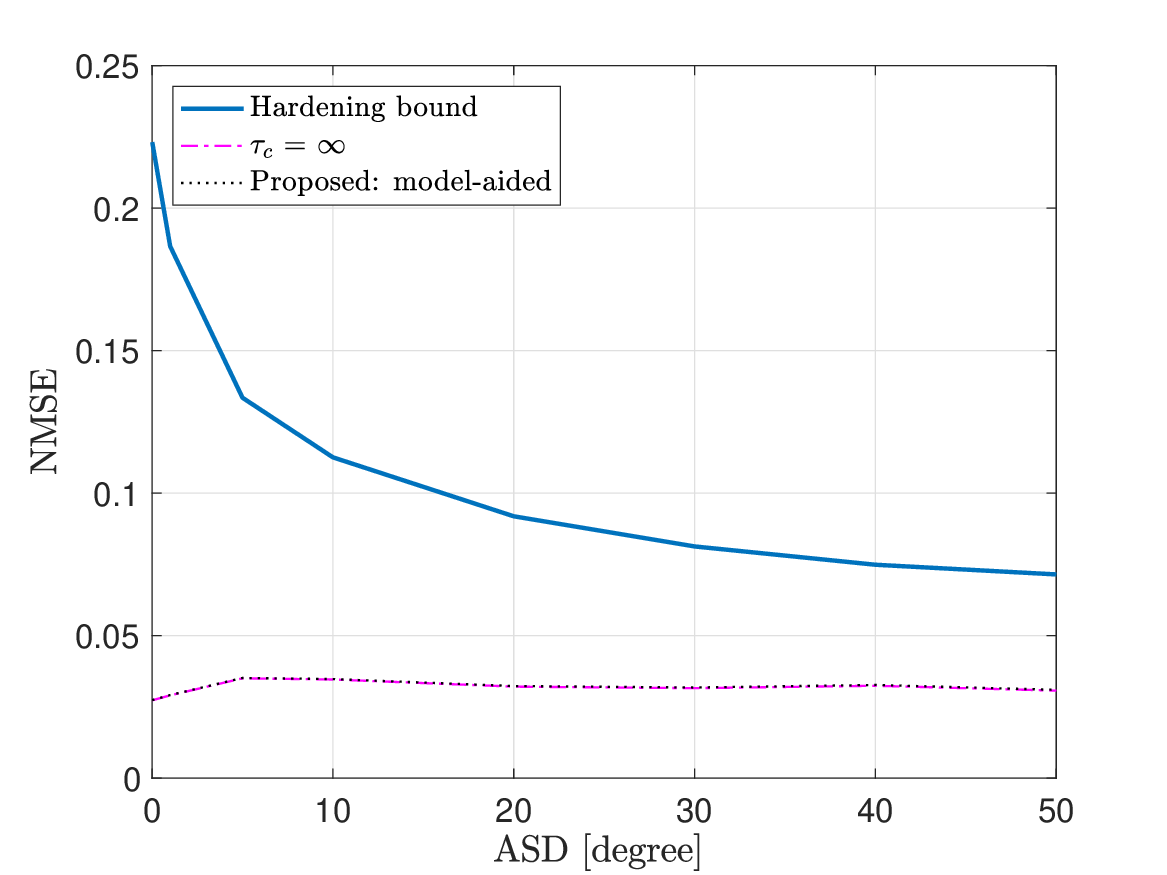}
	\caption{NMSE estimation for correlated channel model as a function of the ASD, MR precoding, DL SNR $=0\,$dB,  and $K =3$.}
	\label{fig:ASD}
	\end{figure}

	\section{Conclusion}\label{sec:conclusion}
This paper presented a model- and data-driven-based approach for downlink channel estimation in a multi-cell Massive MIMO system. The effective channel gains facilitate decoding of the desired signal from the received data signal at the users. The typical approach in Massive MIMO literature is to use the mean of effective channel gains as a piece of legitimate information for decoding. Using the mean of effective channel gains is a perfect assumption for hardening channels. If the channel hardening level is low, the performance fluctuation is high when using mean values. We investigate the performance for both having channel hardening and low level of channel hardening conditions.  We derived closed-form expressions for the downlink channel gain using MR and ZF precoding at BSs for uncorrelated Rayleigh fading channels for the model-based approach.
Furthermore, we provided a closed-form expression MR precoding for the correlated Rayleigh channel model. In addition, we proposed a second method that is data-driven and trains a neural network to identify a mapping between the available information and the effective channel gain.
Moreover, we present a performance comparison of the proposed model- and data-driven-based methods in terms of NMSE and per-user SE. The results highlight the superior performance of our proposed approaches in SE, specifically correlated fading, i.e., when the hardening level is low. 
	\appendix
	
	Several of the proofs in this appendix make uses of the weak law-of-large-numbers, which can be stated as follows.
	
	\begin{lemma} \label{lemma:LLN}
	Let $\{ X_n \}$ be a sequence of independent and identically distributed random variables with mean value $C$ and bounded variance \cite[Ch.~2]{ross2014introduction}. It then follows that
	\begin{equation}
	\frac{1}{N} \sum_{n=1}^{N} X_n \xrightarrow{P} C.
	\end{equation}
	\end{lemma}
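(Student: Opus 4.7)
The plan is to give the classical Chebyshev-inequality proof of the weak law of large numbers, since the hypothesis of bounded variance is exactly what is needed to avoid invoking heavier machinery (such as characteristic functions). Let $\bar{X}_N = \frac{1}{N} \sum_{n=1}^{N} X_n$ denote the sample mean. First I would compute its mean: by linearity of expectation, $\mathbb{E}\{\bar{X}_N\} = \frac{1}{N}\sum_{n=1}^{N}\mathbb{E}\{X_n\} = C$, regardless of $N$. This reduces the convergence in probability to showing that $\bar{X}_N$ concentrates around its mean.

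Next I would compute the variance of $\bar{X}_N$. Because the $X_n$ are independent (i.i.d.\ in fact), the variance of the sum splits as $\mathrm{var}\left\{\sum_{n=1}^{N} X_n\right\} = \sum_{n=1}^{N}\mathrm{var}\{X_n\} = N\sigma^2$, where $\sigma^2 < \infty$ denotes the common variance that is bounded by assumption. Dividing by $N^2$ gives $\mathrm{var}\{\bar{X}_N\} = \sigma^2/N$, which tends to zero as $N \rightarrow \infty$.

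The key step is then Chebyshev's inequality: for every $\varepsilon > 0$,
\begin{equation}
\mathrm{Pr}\left(\left|\bar{X}_N - C\right| > \varepsilon\right) \leq \frac{\mathrm{var}\{\bar{X}_N\}}{\varepsilon^2} = \frac{\sigma^2}{N\varepsilon^2}.
\end{equation}
Letting $N \rightarrow \infty$ with $\varepsilon$ fixed, the right-hand side tends to zero, which by the definition of convergence in probability yields $\bar{X}_N \xrightarrow{P} C$.

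There is essentially no obstacle here; the only thing to be careful about is to state explicitly that the bounded-variance hypothesis is what lets Chebyshev's inequality close the argument (without it, one would need the stronger Khinchin form based on characteristic functions, which is unnecessary in our setting). Since the result is entirely standard, the proof could equivalently be dispatched by a single citation to \cite{ross2014introduction}; the sketch above is included for completeness.
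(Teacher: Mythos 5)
Your proof is correct: the Chebyshev-inequality argument is the standard proof of the weak law of large numbers under the finite-variance hypothesis, and every step (linearity of expectation, variance of the sample mean equal to $\sigma^2/N$ by independence, then Chebyshev) is sound. The paper itself does not prove this lemma at all but simply cites a standard probability textbook, so your write-up just supplies the classical argument behind that citation and is entirely consistent with it.
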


	\subsection{Proof of Lemma~\ref{Lemma:asymptotic}} \label{Appendix:asymptotic}
	Let  $\tau_d = \tau_c - \tau_p$ denote the portion of each coherence interval dedicated to data transmission and notice that we consider the limit where $\tau_d \to \infty$. We can then rewrite \eqref{eq:Xilko} as
	\begin{equation} \label{eq:Xilkv1proof}
	\begin{aligned}
	&\xi_{lk}  = \frac{1}{\tau_d}\sum_{n=1}^{\tau_d} \left|y_{lk} [n]\right|^2 \\
	&=  \frac{1}{\tau_d} 	\sum_{n=1}^{\tau_d} \left|\underbrace{\sqrt{\eta_{lk}}\alpha_{lk}^{lk} s_{lk}[n] + \sum\limits_{\substack{k'=1, k' \neq k}}^K  \sqrt{\eta_{lk'}}\alpha_{lk}^{lk'} s_{lk'}[n]
		+ \sum\limits_{\substack{l'=1, l' \neq l}}^L \sum_{k'=1}^K \sqrt{\eta_{l'k'}}\alpha_{lk}^{l'k'} s_{l'k'}[n]}_{A_{n}} + \tilde{w}_{lk}[n]\right|^2\\
	& =  \frac{1}{\tau_d} 	\sum_{n=1}^{\tau_d}	\left|{A_{n}}\right|^2 + \frac{1}{\tau_d} 	\sum_{n=1}^{\tau_d}	A_{n}^{\ast}\tilde{w}_{lk}[n] +\frac{1}{\tau_d} 	\sum_{n=1}^{\tau_d}	{A_{n}}\tilde{w}^{\ast}_{lk}[n]+\frac{1}{\tau_d} 	\sum_{n=1}^{\tau_d}	\left|\tilde{w}_{lk}[n]\right|^2.		
	\end{aligned}
	\end{equation}
	As $\tau_d \to \infty$, it follows from Lemma~\ref{lemma:LLN} that the second and third terms in \eqref{eq:Xilkv1proof} converges to zero. Moreover, the fourth term converges to its mean value $\sigma_{\mathrm{DL}}^2$.
	Therefore, \eqref{eq:Xilkv1proof} simplifies asymptotically to 
	\begin{equation}\label{eq:simil}
	\begin{aligned} 
	\xi_{lk} & \asymp  
	 \frac{1}{\tau_d} 	\sum_{n=1}^{\tau_d}	\left|{A_{n}}\right|^2 +\sigma_{\mathrm{DL}}^2.
	\end{aligned}
	\end{equation}
	It remains to determine an asymptotic equivalent expression for $\frac{1}{\tau_d} 	\sum_{n=1}^{\tau_d}	\left|{A_{n}}\right|^2$. To this end, let us introduce a new variable $a_{lk}^{l'k'}[n] = \sqrt{\eta_{l'k'}}\alpha_{lk}^{l'k'} s_{l'k'}[n], \forall l,k,l',k'$ and expand $|A_n|^2$ as
	\begin{equation} \label{eq:Anv1}
	\begin{aligned}
	&	\left|{A_{n}}\right|^2 =    \left(a_{lk}^{lk}[n] + \sum\limits_{\substack{k'=1,\\ k' \neq k}}^K  a_{lk}^{lk'}[n]
	+ \sum\limits_{\substack{l'=1, \\l' \neq l}}^L \sum_{k'=1}^K a_{lk}^{l'k'}[n] \right)\left(a_{lk}^{lk}[n] + \sum\limits_{\substack{k'=1,\\ k' \neq k}}^K  a_{lk}^{lk'}[n]
	+ \sum\limits_{\substack{l'=1, \\l' \neq l}}^L \sum_{k'=1}^K a_{lk}^{l'k'}[n] \right)^\ast 
	\end{aligned}
	\end{equation}
		Let $\mathbb{E}_{s}$ denote the expectation with respect to the random data signals and noise (i.e., the conditional expectation given the channel realizations).
	Since all the cross-terms between different $a_{lk}^{l'k'}[n] $ have zero mean, it follows that 	
 
	\begin{equation}
	\begin{aligned}
		\mathbb{E}_{s} \left\{ \left|{A_{n}}\right|^2 \right\} &= \mathbb{E}_{s} \left\{ \left|a_{lk}^{lk}[n]\right|^2 \right\}  + \sum\limits_{\substack{k'=1, k' \neq k}}^K \mathbb{E}_{s} \left\{  \left| a_{lk}^{lk'}[n]\right|^2 \right\}+ \sum\limits_{\substack{l'=1, l' \neq l}}^L \sum_{k'=1}^K  \mathbb{E}_{s}\left\{ \left| a_{lk}^{l'k'}[n] \right|^2 \right\} \\
	& =  \eta_{lk}\left|\alpha_{lk}^{lk}\right|^2 +
	\sum\limits_{\substack{k'=1, k' \neq k}}^K  \eta_{lk'}\left|\alpha_{lk}^{lk'}\right|^2+  \sum\limits_{\substack{l'=1, l' \neq l}}^L \sum_{k'=1}^K  \eta_{l'k'}\left|\alpha_{lk}^{l'k'}\right|^2.
	\end{aligned}
	\end{equation}
	We can then apply  Lemma~\ref{lemma:LLN} a final time to \eqref{eq:simil} to obtain
		\begin{equation}\label{eq:simil2}
	\begin{aligned} 
	\xi_{lk} & \asymp  	\eta_{lk}\left|\alpha_{lk}^{lk}\right|^2 +
	\sum\limits_{\substack{k'=1, k' \neq k}}^K  \eta_{lk'}\left|\alpha_{lk}^{lk'}\right|^2+  \sum\limits_{\substack{l'=1, l' \neq l}}^L \sum_{k'=1}^K  \eta_{l'k'}\left|\alpha_{lk}^{l'k'}\right|^2 +\sigma_{\mathrm{DL}}^2,
	\end{aligned}
	\end{equation}
	which is the result stated in the lemma.

	\subsection{Proof of Lemma~\ref{lemma:asymptoicLLN}} \label{Appendix:asymptoicLLN}
	
			We first reformulate \eqref{eq:Xilk} by dividing both sides by $K$ to obtain an asymptotic equivalence
		\begin{equation}\label{eq:limitoriginal}
		\frac{1}{K} \xi_{lk} \asymp  \frac{1}{K}\eta_{lk}\left|\alpha_{lk}^{lk}\right|^2 + \underbrace{\frac{1}{K}\sum\limits_{\substack{k'=1,\\ k' \neq k}}^K  \eta_{lk'}\left|\alpha_{lk}^{lk'}\right|^2}_{\star} + \underbrace{\frac{1}{K}\sum\limits_{\substack{l'=1,\\ l' \neq l}}^L \sum_{k'=1}^K \eta_{l'k'}\left|\alpha_{lk}^{l'k'}\right|^2}_{\star\star} + \frac{1}{K} \sigma_{\mathrm{DL}}^2
		\end{equation} 
		as $\tau_c \to \infty$.
		At the right-hand side of \eqref{eq:limitoriginal}, $(\star)$ is a scaled-down version of the intra-cell interference and is further reformulated by adding and subtracting the mean value of $|a_{lk}^{lk'}|^2$ as
		\begin{equation}
		\frac{1}{K} \sum\limits_{\substack{k'=1,\\ k' \neq k}}^K  \eta_{lk'}\left|\alpha_{lk}^{lk'}\right|^2  = \frac{1}{K} \sum\limits_{\substack{k'=1,\\ k' \neq k}}^K  \eta_{lk'}\mathbb{E}\left\{\left|\alpha_{lk}^{lk'}\right|^2 \right\} + \frac{1}{K} \sum\limits_{\substack{k'=1,\\ k' \neq k}}^K  \eta_{lk'}\underbrace{\left(\left|\alpha_{lk}^{lk'}\right|^2 - \mathbb{E}\left\{\left|\alpha_{lk}^{lk'}\right|^2 \right\}\right)}_{B_{lk'}},
		\end{equation} 
		where $B_{lk'}$ has zero mean and comes from a common distribution with bounded variance. Hence, in the asymptotic regime where $K \rightarrow \infty$ (in the way stated in the lemma), $B_{lk'} \xrightarrow{P}  0 $. Note that the expectations are with respect to the channel realizations, while the convergence in probability considers both user locations and channel realizations.
		 It then follows from Lemma~\ref{lemma:LLN} that  
		\begin{equation}\label{eq:limit}
		\frac{1}{K} \sum\limits_{\substack{k'=1, k' \neq k}}^K  \eta_{lk'}\left|\alpha_{lk}^{lk'}\right|^2  - \frac{1}{K} \sum\limits_{\substack{k'=1, k' \neq k}}^K  \eta_{lk'}\mathbb{E} \left\{\left|\alpha_{lk}^{lk'}\right|^2 \right\} \xrightarrow{P}  0.
		\end{equation}
		In addition, $(\star\star)$ is a scaled-down version of inter-cell interference. By following a similar approach (with a fixed number $L$ of cells), $(\star\star)$ is reformulated as
		\begin{equation}
		\frac{1}{K} \sum\limits_{\substack{l'=1,\\ l' \neq l}}^L \sum_{k'=1}^K \eta_{l'k'}\left|\alpha_{lk}^{l'k'}\right|^2  = \frac{1}{K} \sum\limits_{\substack{l'=1,\\ l' \neq l}}^L \sum_{k'=1}^K \eta_{l'k'}\mathbb{E}\left\{\left|\alpha_{lk}^{l'k'}\right|^2 \right\} + \frac{1}{K} \sum\limits_{\substack{l'=1,\\ l' \neq l}}^L \sum_{k'=1}^K \eta_{l'k'}\underbrace{\left(\left|\alpha_{lk}^{l'k'}\right|^2 - \mathbb{E}\left\{\left|\alpha_{lk}^{l'k'}\right|^2 \right\} \right)}_{C_{l'k'}},
		\end{equation} 
		where $C_{l'k'}, \forall l' \neq l, \forall k',$ have zero mean and come from a common distribution with bounded variance. As $K \rightarrow \infty $ (in the way stated in the lemma), it follows that $C_{l'k'} \xrightarrow{P} 0$ which implies 
		\begin{equation}\label{eq:limit1}
		\frac{1}{K} \sum\limits_{\substack{l'=1, l' \neq l}}^L \sum_{k'=1}^K \eta_{l'k'}\left|\alpha_{lk}^{l'k'}\right|^2  - \frac{1}{K} \sum\limits_{\substack{l'=1, l' \neq l}}^L \sum_{k'=1}^K \eta_{l'k'}\mathbb{E}\left\{\left|\alpha_{lk}^{l'k'}\right|^2 \right\}  \xrightarrow{P}  0.
		\end{equation}		
		By using the asymptotic equivalences provided in \eqref{eq:limit} and \eqref{eq:limit1} in \eqref{eq:limitoriginal}, we provide the expression given in \eqref{eq:aymptotLemma}. 	
	\bibliographystyle{IEEEtran}
	\bibliography{refs}
\end{document}